\journal{arXiv}
\newtheorem{lemma}{\bf Lemma}[]
\numberwithin{equation}{section}
\begin{document}
\begin{frontmatter}



\title{Coupling plankton and cholera dynamics: insights into outbreak prediction and practical disease management}
\author[inst1]{Biplab Maity}
\author[inst2,insta3,inst1]{Swarnendu Banerjee\corref{corauthor}}
\author[inst4,inst1]{Abhishek Senapati}
\author[inst5,inst6]{Jon Pitchford}
\author[inst1]{Joydev Chattopadhyay}
\cortext[corauthor]{Corresponding author\\ Email: swarnendubanerjee92@gmail.com}

 \address[inst1]{Agricultural and Ecological Research Unit, Indian Statistical Institute, 203, B. T. Road, Kolkata 700108, India}
 \address[inst2]{Dutch Institute for Emergent Phenomenon, Institute for Biodiversity and Ecosystem Dynamics, University of Amsterdam, 1090 GE Amsterdam, The Netherlands}
 \address[insta3]{Copernicus Institute of Sustainable Development, Utrecht University, 3508 TC, Utrecht, the Netherlands}
\address[inst4]{Saw Swee Hock School of Public Health, National University of Singapore, Singapore}
\address[inst5]{Department of Biology, University of York, Wentworth Way, York, YO10 5DD, UK}
\address[inst6]{Department of Mathematics, University of York, Heslington, York, YO10 5DD, UK}

\begin{abstract}
Despite extensive control efforts over the centuries, cholera remains a globally significant health issue. Seasonal emergence of cholera cases has been reported, particularly in the Bengal delta region, which is often synchronized with plankton blooms. This phenomenon has been widely attributed to the commensal interaction between \textit{Vibrio cholerae} and zooplankton in aquatic environments. Understanding the role of plankton dynamics in cholera epidemiology is therefore crucial for effective policy-making. To this end, we propose and analyze a novel compartment-based transmission model that integrates phytoplankton-zooplankton interactions into a human-bacteria cholera model. We show that zooplankton-mediated transmission can lead to counterintuitive outcomes, such as an outbreak with a delayed and lower peak still resulting in a larger overall outbreak size. 
Such outbreaks are prolonged by maintaining infections at lower levels during the post-peak phase, thereby intensifying epidemic overshoot and promoting the inter-epidemic persistence of pathogens. Furthermore, our analysis reveals that the timing of filtration-like interventions can be strategically guided by ecological indicators, such as phytoplankton blooms. Our study underscores the importance of incorporating ecological aspects in epidemiological research to gain a deeper understanding of disease dynamics.

\end{abstract}


\begin{keyword}
Cholera \sep Plankton dynamics \sep Mathematical modelling \sep Epidemic overshoot \sep Filtration

\end{keyword}

\end{frontmatter}

\section{Introduction}
Cholera, while treatable, remains a major global health emergency, with an estimated 2.9 million annual cases including 95,000-1,07,000 deaths globally \citep{troeger2018estimates, cholera_global}. Despite uncertainties in case reporting, recent methodologies indicate between 470,000-790,00 cholera cases, and up to 5,000 deaths, annually in the early 2020s \citep{venkatesan2024new, cholera12}. The majority of these cases are in Africa, but the presence of cholera in Afghanistan, Yemen, Pakistan, Haiti, Bangladesh, and India underlines its importance in emerging and developing countries \citep{ilic2023global,xu2024enhanced, amisu2024cholera, cholera_africa, cholera_ECDC}. Country-specific factors and a critical shortage of oral vaccines in 2023 have led the WHO to categorize the resurgence of cholera as a grade 3 emergency \citep{cholera12}.

The consistent seasonal re-emergence of cholera in the Bengal Delta region over the past decades has been attributed to seasonal plankton blooms \citep{islam2015role, de2011role, huq1983ecological, huq2005critical, jutla2012satellite}. It has been widely established that \textit{Vibrio cholerae} (\textit{V. cholerae}), the causative bacteria, is associated with plankton \citep{vezzulli2010environmental, colwell1996global, almagro2013cholera, lutz2013environmental}. The detection of \textit{V. cholerae} associated with zooplankton cells along the coasts of Brazil and Mexico suggests that the ecological relationship between bacteria and plankton is widespread \citep{martinelli2010vibrio, lizarraga2009association}. Copepods, a crustacean zooplankton, are the largest known natural reservoir of the \textit{V. cholerae} and have been implicated as a potential vector \citep{vezzulli2010environmental, colwell1994environmental}. Experimental studies by \cite{turner2009plankton} and \cite{rawlings2007association} in natural estuarine and coastal ecosystems of Georgia, USA, as well as in the Bengal delta region found a significant correlation between \textit{V. cholerae} density and copepod abundance. Studies indicate that a single copepod can carry $10^4$ to $10^6$ bacterial cells \citep{de2008environmental, heidelberg2002bacteria, colwell1996viable, huq1983ecological}. 

\textit{Vibrios} colonize the gut of the zooplankton and the surface biofilms, where they can multiply rapidly under favorable nutrient conditions \citep{lutz2013environmental,sochard1979bacteria, huq1983ecological, huq1984influence, kirn2005colonization}. Further, the zooplankton protects the \textit{Vibrios} from grazing as well as from chemical disinfectants, significantly prolonging bacterial survival compared to free-living cells \citep{conner2016staying, huq1983ecological, chowdhury1997effect, tang2010linkage, perera2022zooplankton}. This relationship between the zooplankton and the \textit{Vibrios}, where only the former benefits from the association, is known as commensalism \citep{lutz2013environmental}. The empirical study by \cite{lipp2002effects} asserted that, at high concentrations, \textit{Vibrios} are more likely to attach to zooplankton cells rather than remain free in the surrounding water. As a result, exposure to commensal zooplankton could potentially lead to the consumption of a large bacterial inoculum. This underscores the fundamental role of zooplankton in cholera dynamics and emphasizes the importance of studying plankton ecology, particularly in regions with evidence of \textit{V. cholerae} reservoirs \citep{constantin2014community}.

Mathematical models have been proven to be an important and reliable tool for understanding cholera transmission mechanisms and guiding policymaking during several outbreaks \citep{king2008inapparent, fung2014cholera, mukandavire2011estimating, miller2010modeling, andrews2011transmission, maity2023model}. In spite of numerous empirical studies over the past decades linking plankton abundance to increased cholera infections \citep{islam2015role, de2011role, lipp2002effects, huq1983ecological, huq2005critical, jutla2012satellite, islam1994probable, kirn2005colonization}, a mechanistic model that explicitly connects the ecology of plankton to \textit{V. cholerae} transmission is still lacking. A recent study by \cite{kolaye2019mathematical} considered commensalism between bacteria and phytoplankton, focusing on bacterial metabolism changes. However, this study neither included any explicit compartment for commensal plankton nor studied transmission via plankton. To address this gap, we have developed a novel compartmental transmission model that integrates the phytoplankton-zooplankton interactions with the classical susceptible-infected-recovered-bacteria (SIRB) cholera model \citep{Codeco01}. Our model includes a separate compartment for \textit{Vibrio}-associated zooplankton cells and accounts for two transmission routes: one for free-living bacterial cells and another for bacteria-associated zooplankton. The role of zooplankton-mediated transmission in short-term epidemic outbreaks has been investigated. We assess the relative importance of transmission routes in shaping epidemic progression. Additionally, we have explored the epidemic overshoot phenomenon, which is linked to the post-peak severity of outbreaks. Further, we evaluate the effect of filtration as a practical disease management measure.

\section{Model formulation} \label{sec:model_formulation}
Our approach combines established SIR models for disease transmission with well-developed models for plankton dynamics, using a minimal set of biologically justifiable and quantifiable assumptions. Fig.~\ref{fig1-schematic} and Table~\ref{table-1} summaries this approach. This framework facilitates a range of analyses, including the characterization of steady states and transient dynamics, computational sensitivity analysis, and the evaluation of control scenarios, across a range of ecologically relevant time scales.

The total human population at time $t$, $N(t)$, is divided into three compartments: susceptible $(S(t))$, infected $(I(t))$ and recovered $(R(t))$. Earlier models have typically included a single compartment for the bacterial population in the water column \citep{Codeco01, righetto2012role}. However, the commensal relationship between \textit{V. cholerae} and zooplankton results in the bacteria being associated with zooplankton cells, which can facilitate human infection upon exposure \citep{vezzulli2010environmental, perera2022zooplankton}.
This phenomenon motivates the inclusion of an explicit compartment for bacteria-associated zooplankton ($Z_B(t)$), which is formed when free-living bacteria ($B(t)$) attaches to uncolonized zooplankton ($Z_F(t)$). Hence, $Z(t)=Z_F(t)+Z_B(t)$ is the total zooplankton density at time $t$. The force of \textit{Vibrio}-zooplankton commensalism is represented by the term $\displaystyle c \sigma\frac{B}{h_m + B}$. Here, $\sigma$ denotes the rate of bacteria-zooplankton association and $h_m$ is the half-saturation constant of the association. The parameter $c$ represents the average number of \textit{Vibrio} cells per zooplankton, which we term the `colonization coefficient'. Since phytoplankton ($P(t)$) governs zooplankton ($Z(t)$) abundance, we consider the dynamics of phytoplankton-zooplankton using a well-studied Rosenzweig–MacArthur type model \citep{rosenzweig1971paradox}.

Susceptible humans become infected through exposure to free-living bacteria (via water contamination), $B(t)$, with a force of infection $\lambda_B=\displaystyle \frac{\beta B}{h_b+B}$ and \textit{Vibrio}-associated zooplankton, $Z_B(t)$, with a force of infection $\lambda_Z=\displaystyle \frac{\beta_z Z_B}{h_z+Z_B}$. Here, $h_b$ and $h_z$ represent the half-saturation constant for transmission via bacterial and zooplankton routes, respectively. Note that $h_z$ depends on the pathogen load of commensal zooplankton and is inversely proportional to the bacterial colonization coefficient, $c$. The use of Holling type-II transmission terms follows existing literature \citep{Codeco01, hartley2005hyperinfectivity, righetto2012role}. Our model, which includes transmission via both free-living bacteria and bacteria-associated zooplankton, is given as follows (see Fig.~\ref{fig1-schematic} for schematic).

\begin{eqnarray}
\begin{array}{llll}

\displaystyle \frac{d S}{dt} &=& \displaystyle \Lambda -  \frac{\beta SB}{h_b + B} - \frac{\beta_z SZ_B}{h_z + Z_B}-\mu S + \omega R,\\[3ex]

\displaystyle \frac{d I}{dt} &=& \displaystyle \frac{\beta SB}{h_b + B} + \frac{\beta_z SZ_B}{h_z + Z_B}  - (\gamma + \mu + \delta)I,\\[3ex] 

\displaystyle \frac{d R}{dt} &=& \displaystyle \gamma I -(\mu + \omega)R,\\[3ex]

\displaystyle \frac{d B}{dt} &=& \displaystyle  \xi I -d_b B-c \sigma \frac{BZ_F}{h_m+B},\\[3ex]

\displaystyle \frac{d Z_B}{dt} &=& \displaystyle  \sigma \frac{BZ_F}{h_m+B} -d_zZ_B,\\[3ex]

\displaystyle \frac{d Z_F}{dt} &=& \displaystyle \eta \frac{\alpha P(Z_F+Z_B)}{h_p +P} -d_zZ_F-\sigma \frac{BZ_F}{h_m+B},\\[3ex]

\displaystyle \frac{d P}{dt} &=& \displaystyle r_pP(1-\frac{P}{K})-\frac{\alpha P(Z_F+Z_B)}{h_p+P}.

\end{array}
\label{final_model}
\end{eqnarray}

Infected individuals either recover at a rate $\gamma$ or die due to infection at a rate $\delta$ and contribute bacterial cells into the environment through excretion at a rate $\xi$ over their infectious period. As cholera does not confer life-long immunity \citep{sanches2011role, righetto2012role}, recovered individuals, although initially immune to the pathogen, eventually lose immunity at a rate $\omega$ and replenish into the susceptible compartment. $\Lambda$ denotes the recruitment rate of susceptible individuals through birth and immigration, while each class experiences a natural death rate $\mu$. Here, $d_b$ represents the rate of loss of free-living bacteria, accounting for both mortality and the decline in vitality. Phytoplankton grows at a rate $r_p$ up to its maximum achievable density, known as the carrying capacity, $K$. The maximal grazing rate of the zooplankton on phytoplankton is $\alpha$ with a conversion coefficient $\eta$. Here, $d_z$ and $h_p$ refer to the zooplankton death rate and the half-saturation concentration of zooplankton grazing, respectively.

Note that the plankton dynamics, represented by the last three equations in model~\eqref{final_model}, is independent of the human-bacteria (SIRB) dynamics (see Eq.~\eqref{phyto_zoo_model} in \ref{appendix-phyto-zoo model}).
This independence implies that the bacteria-zooplankton association does not influence overall plankton density, which is representative of the commensal interaction. However, this association leads to the formation of $Z_B$ and reduces free-living bacterial density, thereby affecting human-bacteria dynamics. Note that we do not consider direct human-to-human transmission here, as it has been shown that the aquatic environment plays a decisive role in the survival and transmission of pathogens during cholera outbreaks in several regions \citep{vezzulli2010environmental}. If the bacteria-zooplankton association is excluded (i.e. $\sigma=0$), our model~\eqref{final_model} resembles previous SIRB cholera models, such as those used in \citep{Codeco01,sanches2011role}.

\begin{figure}[H]
 \begin{center}
{\includegraphics[width=0.8\textwidth]{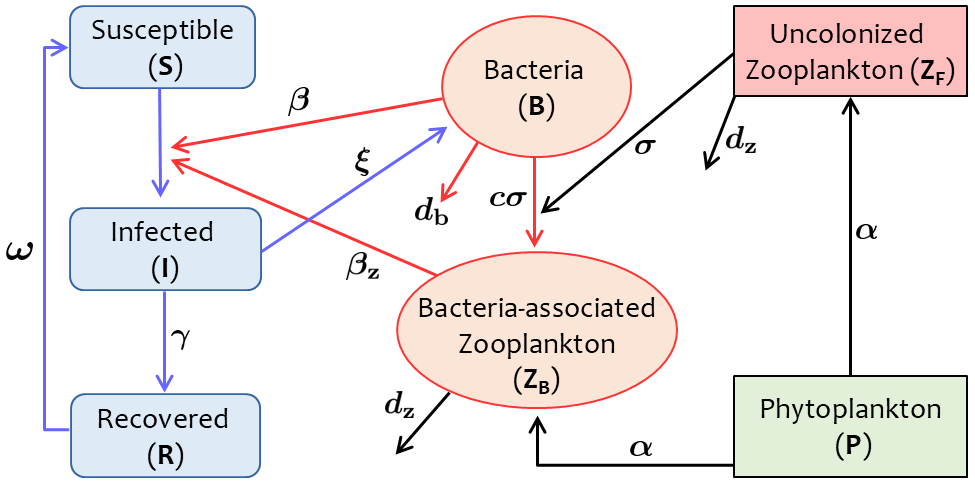}}
 \end{center}
 \caption{Model schematic illustrating the linkage between phytoplankton-zooplankton interactions and the classical susceptible-infected-recovered-bacteria (SIRB) cholera model through the commensal association between \textit{Vibrio cholerae} and zooplankton. For clarity, human birth and death terms are not shown. Model parameters and their corresponding values are given in Table~\ref{table-1}.}
 \label{fig1-schematic}
 \end{figure}

\begin{table}[h!]
\tabcolsep 14 pt
\centering
\begin{tabular}{|p{1 cm}| p{4cm}|p{2.6cm}| p{2.5cm}| p{3 cm}|}
\hline
\tiny{\textbf{Parameters}} & \tiny{\textbf{Description}} &\tiny{\textbf{Value}} & \tiny{\textbf{Reference}}  \\
[0.5ex]
\hline
\tiny{$\Lambda$} & \tiny{Constant recruitment rate of human population} & \tiny{$6.85\times 10^{-5}\times N_0$ person $\rm day^{-1}$} & \tiny{\citep{dimitrov2014comparative}}\\

\tiny{$\beta$} & \tiny{Transmission rate via bacteria} & \tiny{0.214 $\rm day^{-1}$} &\tiny{\citep{hartley2005hyperinfectivity, neilan2010modeling, mukandavire2011estimating}} \\

\tiny{$\beta_z$} & \tiny{Transmission rate via zooplankton} & \tiny{variable but less than $\beta$} &\tiny{Investigated (\ref{param})} \\

\tiny{$h_b$} & \tiny{Half-saturation constant of bacterial transmission} & \tiny{$10^9$ cells $\rm L^{-1}$}&\tiny{\citep{hartley2005hyperinfectivity}} \\

\tiny{$h_z$} & \tiny{Half-saturation constant of transmission via zooplankton} & \tiny{20 $\rm (mg~dw)$ $\rm L^{-1}$} &\tiny{Investigated (\ref{param})} \\

\tiny{$\mu$} & \tiny{Natural death rate of human} & \tiny{$3.8 \times 10^{-5}$ $\rm day^{-1}$} &\tiny{\citep{keeling2008modeling,martcheva2015introduction}}\\

\tiny{$\omega$} & \tiny{Rate of immunity loss of recovered individuals} & \tiny{$0.00092$ $\rm day^{-1}$}& \tiny{\citep{koelle2005refractory, dimitrov2014comparative}}\\

\tiny{$\gamma$} & \tiny{Recovery rate of infected human} & \tiny{$0.2~\rm day^{-1}$} & \tiny{\citep{dimitrov2014comparative, mukandavire2011estimating}}\\

\tiny{$\delta$} & \tiny{Disease induced mortality rate of human} &\tiny{0.013 $\rm day^{-1}$}& \tiny{\citep{kolaye2019mathematical, king2008inapparent, neilan2010modeling}}\\

\tiny{$\xi$} & \tiny{Bacteria shedding rate of infected humans} &\tiny{10-10$^4$~cells~$\rm L^{-1}$ $\rm day^{-1}$ per person}& \tiny{\citep{neilan2010modeling, Codeco01, jensen2006modeling}}\\

\tiny{$d_b$} & \tiny{Removal rate of the bacteria (birth - death)} & \tiny{0.33 $\rm day^{-1}$} & \tiny{\citep{Codeco01, kolaye2019mathematical}}  \\

\tiny{$\sigma$} & \tiny{Rate of bacteria-zooplankton association} & \tiny{0.005-0.1 $\rm day^{-1}$}& \tiny{Investigated (Sec~\ref{sec:result_outbreak})}\\

\tiny{$c$} & \tiny{Colonization coefficient of bacteria} & \tiny{$5\times 10^7$ $\rm cells~(mg~dw)^{-1}$}& \tiny{Investigated (\ref{param})}\\

\tiny{$h_m$} & \tiny{Half saturation constant for bacteria-zooplankton association} & \tiny{$10^7$ cells $\rm L^{-1}$}& \tiny{\ref{param}}\\

\tiny{$\eta$} & \tiny{Conversion coefficient} & \tiny{0.6} & \tiny{\citep{scheffer1991fish}}  \\

\tiny{$\alpha$} & \tiny{Predation rate} & \tiny{$0.4$ $\rm day{-1}$} & \tiny{\citep{scheffer1991fish}}  \\

\tiny{$h_p$} & \tiny{Half saturation constant of phytoplankton} & \tiny{0.6 (mg~dw) $\rm L^{-1}$} & \tiny{\citep{scheffer1991fish}}\\

\tiny{$d_z$} & \tiny{Death rate of zooplankton} & \tiny{0.06 $\rm day^{-1}$} &  \tiny{\citep{da2020non,hirst2002mortality,di2017non}}\\

\tiny{$r_p$} & \tiny{Growth rate of phytoplankton} & \tiny{0.5 $\rm day^{-1}$} & \tiny{\citep{scheffer1991fish}}  \\

\tiny{$K$} & \tiny{Carrying capacity of phytoplankton} & \tiny{1 (mg~dw) $\rm L^{-1}$} & \tiny{\citep{freund2006bloom}}\\[1.5ex]

\hline
\end{tabular}
\caption{\small{Description of parameters for the model~\eqref{final_model}}}
\label{table-1}
\end{table}

\newpage
\section{Results}
\label{section_results}
Our results are presented according to the ecologically and management-motivated time scales relevant to this study. The main body of work (Sec.~\ref{sec:result_outbreak}) analyses cholera outbreaks in human populations over a short time scale, where human population and immunity dynamics can be neglected (i.e. $\mu, \Lambda = 0$ and $\omega = 0$ in model~\eqref{final_model}). We then contextualize these findings in the context of practical interventions connected to water filtration (Sec.~\ref{sec:result_filtration}).

\subsection{Outbreak dynamics}\label{sec:result_outbreak}
To remove the effect of transient plankton dynamics on disease outbreak, we consider the coexistent steady-state plankton densities as the initial condition for phytoplankton ($P_0=P^*$) and zooplankton ($Z_{F_0}=Z^*, Z_B=0$) in model~\eqref{final_model} (see Eq.~\eqref{P_and_Z}  for $P^*, Z^*$ in \ref{appendix-phyto-zoo model}). This approach is reasonable in light of the following: (i) the transient plankton dynamics has no biological significance in the context of an outbreak, and (ii) the outbreak does not alter the plankton dynamics. Note that the zooplankton-free equilibrium ($Z_F=Z_B=0$) of the phytoplankton-zooplankton system is not relevant in the context of our study.

The necessary condition for the initial growth of an outbreak in the presence of both transmission routes is given by (see \ref{out_cond_cal})
\begin{eqnarray}
    \begin{array}{ll}
        \mathcal{R}_{0_{\rm BZ}}^{\rm out} &= \displaystyle \mathcal{R}_{0_B}^{\rm out}+\mathcal{R}_{0_Z}^{\rm out}\\[1.5ex]
         &= \displaystyle\frac{\xi N_0}{(\gamma+\delta)\Big(d_b+c\sigma \frac{Z^*}{h_m}\Big)}\Bigg[\frac{\beta}{h_b} + \frac{\beta_z}{h_z}\sigma\frac{Z^*}{d_zh_m}\Bigg]> 1
    \end{array}
    \label{outbreak_cond}
\end{eqnarray}
where \begin{eqnarray*}
\begin{array}{ll}
     \displaystyle \mathcal{R}_{0_B}^{\rm out}&=\displaystyle\frac{\xi N_0}{(\gamma+\delta)\Big(d_b+c\sigma \frac{Z^*}{h_m}\Big)}\frac{\beta}{h_b},\\[2ex]
     \displaystyle\mathcal{R}_{0_Z}^{\rm out}&=\displaystyle\frac{\xi N_0}{(\gamma+\delta)\Big(d_b+c\sigma \frac{Z^*}{h_m}\Big)}\frac{\beta_z}{h_z}\sigma\frac{Z^*}{d_zh_m}.
\end{array}
\end{eqnarray*}
Here, $\displaystyle \mathcal{R}_{0_{\rm BZ}}^{\rm out}$ denotes the basic reproduction number. Also, $\displaystyle \mathcal{R}_{0_B}^{\rm out}$ and $\displaystyle \mathcal{R}_{0_Z}^{\rm out}$ represent contributions associated with the free-living bacterial route and the bacteria-associated zooplankton route, respectively, to the initial outbreak growth. $N_0$ denotes the initial human population size.

\subsubsection{Impact of bacteria-zooplankton association on $\mathcal{R}_{0_{\rm BZ}}^{\rm out}$}
In the absence of the bacteria-zooplankton ($B$-$Z$) association (i.e. $\sigma = 0$), the condition for the initial outbreak growth $\displaystyle \mathcal{R}_0^{\rm out}= \mathcal{R}_{0_{\rm BZ}}^{\rm out}|_{\sigma = 0}=\displaystyle\frac{\xi N_0}{d_b(\gamma+\delta)}\frac{\beta}{h_b} >1$, aligns with the classic SIRB cholera model \citep{Codeco01}. The colonization of zooplankton by bacterial cells reduces the free-living bacterial density, thereby lowering $\displaystyle \mathcal{R}_{0_B}^{\rm out}$. This association simultaneously increases $\displaystyle \mathcal{R}_{0_Z}^{\rm out}$ through the transmission via commensal zooplankton. As a result, for a specific $\beta$, the $B$-$Z$ association can lead to an increase or decrease of $\displaystyle \mathcal{R}_{0_{\rm BZ}}^{\rm out}$ relative to $\displaystyle \mathcal{R}_0^{\rm out}$ depending on $\sigma$ and $\beta_z$. 

The two-parameter space ($\beta$-$\beta_z$) can be divided into six regions using the three lines, $\displaystyle \mathcal{R}_{0}^{\rm out}=1$ (dashed vertical line), $\displaystyle \mathcal{R}_{0_{\rm BZ}}^{\rm out}=1$ (red line), and $\displaystyle \mathcal{R}_{0_{\rm BZ}}^{\rm out}=\displaystyle \mathcal{R}_{0}^{\rm out}$ (black line) (Fig.~\ref{fig2-beta_vs_betaz}a). While, in region \boxed{1} and \boxed{6}, the $B$-$Z$ association increases $\displaystyle \mathcal{R}_{0_{\rm BZ}}^{\rm out}$ compared to $\mathcal{R}_0^{\rm out}$, in \boxed{2} and \boxed{3}, it decreases the same. Here, the decrease in $\displaystyle \mathcal{R}_{0_{\rm BZ}}^{\rm out}$ can be achieved with reduced $\beta_z$, which signifies the reduced exposure to zooplankton contaminated water. It is interesting to note that in \boxed{6}, an outbreak cannot initiate without the $B$-$Z$ association, highlighting the significance of zooplankton-driven route. This contrasts with the behavior in \boxed{3}, where an outbreak that can grow without this association decays in its presence. In regions \boxed{4} and \boxed{5}, the outbreak fails to grow irrespective of the presence or absence of the $B$-$Z$ association, making it insignificant in the context of our study. For a lower $\beta$, a relatively high $\beta_z$ can still trigger an outbreak when $\mathcal{R}_{0_{\rm BZ}}^{\rm out}>1$. This implies that zooplankton-mediated transmission can compensate for a low transmission rate via free-living bacteria.

For any point in \boxed{1}, e.g. $P_1$, an increased bacteria-zooplankton association rate ($\sigma$) always increases $\mathcal{R}_{0_{\rm BZ}}^{\rm out}$ (red line Fig.~\ref{fig2-beta_vs_betaz}(b)). Further, slope of the line $\displaystyle \mathcal{R}_{0_{\rm BZ}}^{\rm out}=1$ decreases with increasing $\sigma$ (red-dashed line in Fig.~\ref{fig2-beta_vs_betaz}(a)) (see \ref{app_out_sec2} for details). As a result, regions \boxed{6} and \boxed{3} expand at the expense of \boxed{5} and \boxed{2}, respectively. Consequence of this can be observed by following the fate of the outbreak on increasing $\sigma$ at points $P_4$ and $P_5$. At $P_4$, where the outbreak initially grows ($\mathcal{R}_{0_{\rm BZ}}^{\rm out}>1$), the system is driven to the region where the outbreak decays ($\displaystyle \mathcal{R}_{0_{\rm BZ}}^{\rm out}<1$) with increasing $\sigma$ (green-dashed line in Fig.~\ref{fig2-beta_vs_betaz}(b)). The converse is true for the point $P_5$ as shown by red-dashed line in Fig.~\ref{fig2-beta_vs_betaz}(b). Additionally, there are some points in \boxed{2}, e.g. $P_3$, for which increasing $\sigma$ will decrease $\mathcal{R}_{0_{\rm BZ}}^{\rm out}$ while still maintaining it above 1 (green line in Fig.~\ref{fig2-beta_vs_betaz}(b)). The black ($P_2$) line in Fig.~\ref{fig2-beta_vs_betaz}(b) represents the scenario $\displaystyle \mathcal{R}_{0_{\rm BZ}}^{\rm out} = \mathcal{R}_0^{\rm out}$ under any association rate ($\sigma$). Moreover, from Fig.~\ref{fig2-beta_vs_betaz}(b), it is important to note that for a fixed $\beta$, changes in $\beta_z$ (e.g., $P_1$-$P_4$) lead to relatively larger variations in $\mathcal{R}_{0_{\rm BZ}}^{\rm out}$ under a high $\sigma$ compared to a lower one. This can be explained by the fact that under favorable conditions, large number of \textit{Vibrio}-associated zooplankton cells significantly impact the disease spread.
We use global sensitivity analysis to show that the above result holds true for all values of $\beta$ (see Fig.~\ref{figS1-prcc_R0} in \ref{sensitivity}).

Notably, in this study, we always consider $d_z<d_b$, which is well supported by ecological evidences \citep{Codeco01, kolaye2019mathematical, da2020non, hirst2002mortality,di2017non}. An increase in $d_z$ increases the slope of the $\mathcal{R}_{0_{\rm BZ}}^{\rm out}=\mathcal{R}_0^{\rm out}$ line, thereby reducing the region \boxed{1} (see Fig.~\ref{fig7-app_beta_vs_betaz_dz} in Appendix). This indicates a diminished effect of transmission via zooplankton as the persistence of bacterial cells in association with zooplankton is reduced. Also, the size of the regions \boxed{1}-\boxed{6} may vary for different human demographic factors in the endemic scenario (see Fig.~\ref{figS4-long_term_beta_vs_betaz} in \ref{app_long_term_region}).

\begin{figure}[h!]
\begin{center}
\includegraphics[width=0.99\textwidth]{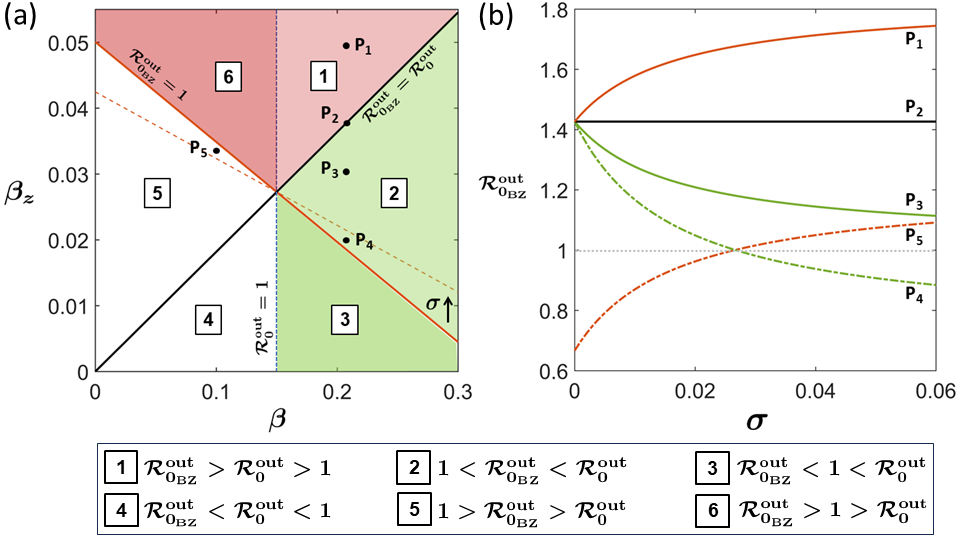}
\end{center}
\caption{The overall impact of the bacteria-zooplankton association on $\displaystyle \mathcal{R}_{0_{\rm BZ}}^{\rm out}$. (a) Effect of the transmission rates $\beta$, $\beta_z$. The red and black lines depict $\displaystyle \mathcal{R}_{0_{\rm BZ}}^{\rm out}=1$ and $\displaystyle \mathcal{R}_{0_{\rm BZ}}^{\rm out}=\mathcal{R}_0^{\rm out}$, respectively. The vertical dashed line indicates $\mathcal{R}_0^{\rm out}=1$, corresponding to the absence of the $B$-$Z$ association. An increase in $\sigma$ expands regions \boxed{6} and \boxed{3} at the expense of \boxed{5} and \boxed{2}, respectively, as indicated by the red-dashed line. (b) The effect of the association rate ($\sigma$) on points $P_1$-$P_5$, which belong to different regions demonstrated in (a).
 }
\label{fig2-beta_vs_betaz}
\end{figure}

\begin{figure}[h!]
\begin{center}
{\includegraphics[width=0.99\textwidth]{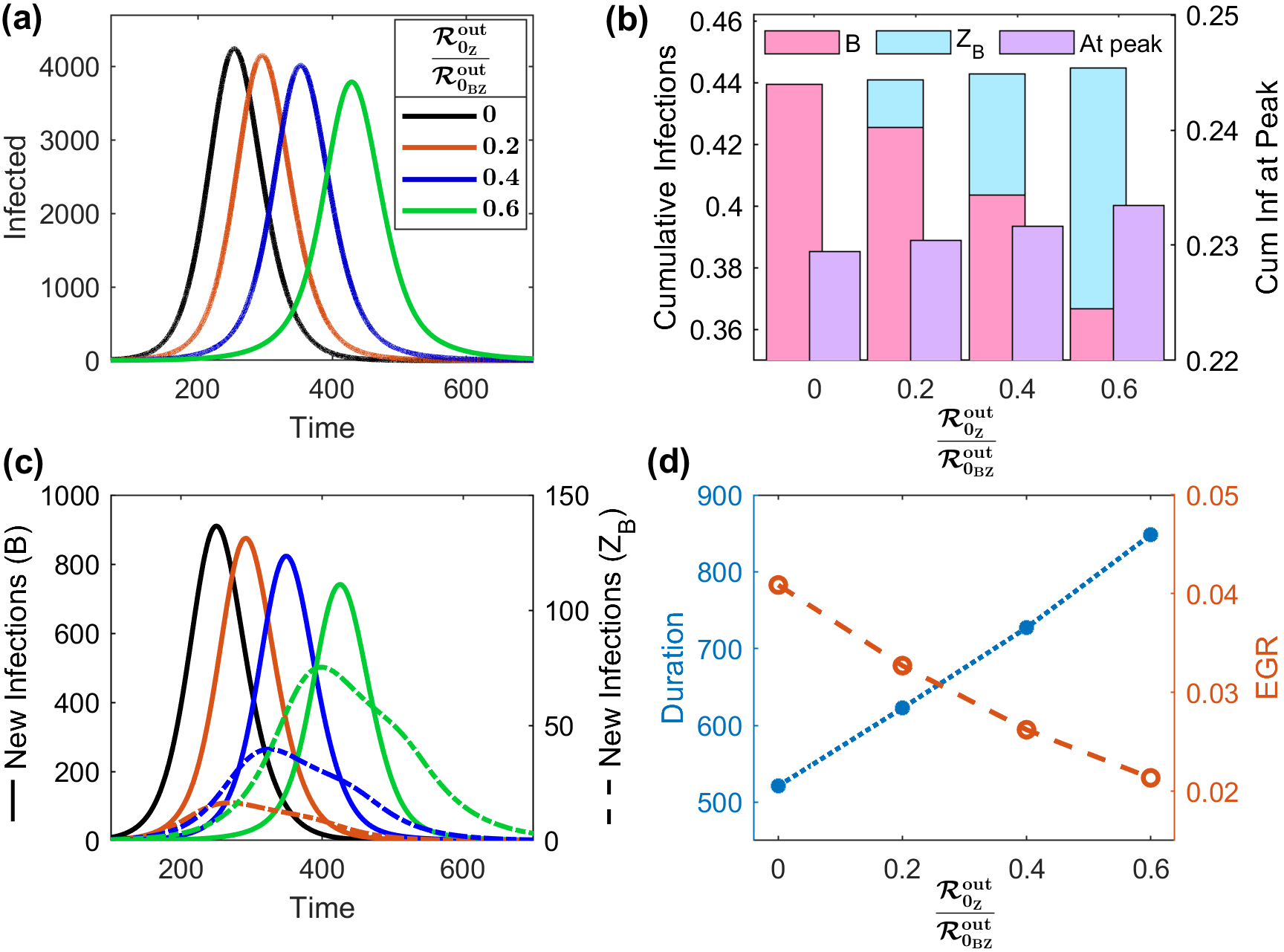}}
\end{center}
\caption{Effect of changing the relative contribution of transmission routes while keeping $\mathcal{R}_{0_{\rm BZ}}^{\rm out}$ fixed. (a) Active infections, (b) cumulative infections (proportion of the population), (c) new infections via the $B$ (solid lines) and $Z_B$ (dashed lines) route, and (d) epidemic duration (blue dotted) and initial epidemic growth rate (EGR) (red dashed). Increasing $\mathcal{R}_{0_Z}^{\rm out}$ decreases initial EGR, delays and lowers the peak, but increases cumulative infections both at the peak and at the end of the outbreak.
}
\label{fig3-rel_cont}
\end{figure}

\subsubsection{Relative contribution of transmission routes}
The ratio of contributions to the basic reproduction number ($\mathcal{R}_{0_{\rm BZ}}^{\rm out}$) from zooplankton ($Z_B$) and bacterial ($B$) routes can be expressed as $\displaystyle \mathcal{R}_{0_Z}^{\rm out} /\mathcal{R}_{0_B}^{\rm out} = (\sigma Z^* h_b \beta_z)/(d_Z h_z h_m \beta)$. 
Environmental factors, such as fluctuations in the temperature of the coastal sea surface, salinity, pH, heatwave, rainfall, floods, and ecological events such as plankton blooms, can influence the bacteria-zooplankton association \citep{huq1984influence,shackleton2024mechanisms,conner2016staying}, causing one transmission route to become more prominent over the other. Therefore, it is crucial to investigate for a fixed $\mathcal{R}_{0_{\rm BZ}}^{\rm out}$, how altering relative contributions ($\mathcal{R}_{0_B}^{\rm out}$, $\mathcal{R}_{0_Z}^{\rm out}$) through different transmission routes can impact disease progression and overall disease burden. To this aim, we keep $\mathcal{R}_{0_{\rm BZ}}^{\rm out}=\mathcal{R}_0^{\rm out}>1$, which always implies $\beta_z=\displaystyle \frac{ch_zd_z}{h_bd_b}\beta$ (black line in Fig.~\ref{fig2-beta_vs_betaz}(a)). For a fixed $\beta$, corresponding to each pair of relative contributions ($\mathcal{R}_{0_B}^{\rm out}$, $\mathcal{R}_{0_Z}^{\rm out}$), the unique association rate $\sigma$ is given by (see details in \ref{app_rel_cont})
\begin{equation*}
    \displaystyle \sigma= \displaystyle \frac{d_bh_m}{cZ^*} \Bigg(\frac{\mathcal{R}_{0_{\rm BZ}}^{\rm out}}{\mathcal{R}_{0_B}^{\rm out}}-1\Bigg)=\displaystyle \frac{d_bh_m}{cZ^*} \frac{\mathcal{R}_{0_Z}^{\rm out}}{\mathcal{R}_{0_B}^{\rm out}}.
\end{equation*}
Now, we compare outbreak trajectories by varying $\sigma$ that captures scenarios for different relative contributions from the both routes. We examine various characteristics of these trajectories such as peak values, peak timing, cumulative infections, and epidemic duration (see Fig.~\ref{fig3-rel_cont}).


In the absence of \textit{Vibrio}-zooplankton association ($\sigma=0$), transmission only via the bacterial route drives the prevalence with a relatively higher and earlier peak, followed by a steeper decline (black line in Fig.~\ref{fig3-rel_cont}(a)). When the $Z_B$ route contributes a moderate to high proportion ($20\%$ to $60\%$), the outbreak progresses over a longer duration with delayed and reduced epidemic peak (Fig.~\ref{fig3-rel_cont}(a),(d)). However, it results in a slightly increased proportion of cumulative infections both at peak and at the end of the outbreak, compared to the $\sigma=0$ scenario (Fig.~\ref{fig3-rel_cont}(b)). The delayed dynamics in the presence of the $B$-$Z$ association arise because the contribution from the $Z_B$ route leads to a comparatively slower initial epidemic growth rate (EGR) (Fig.~\ref{fig3-rel_cont}(d) and \ref{EGR} for EGR calculations). The zooplankton-mediated transmission route can be viewed as a delayed transmission pathway, as the zooplankton have to first be colonized by bacteria cells before transmitting the infection to susceptible persons. The increase in cumulative infections at the peak in the presence of the zooplankton route means greater depletion of susceptible population before the peak and indicates an increased threshold of herd immunity for the same $\displaystyle \mathcal{R}_{0_{\rm BZ}}^{\rm out}$ (Fig.~\ref{fig3-rel_cont}(b)). These observations emphasize that predictions about the outbreak trajectories and disease severity can not be precisely determined only by analyzing $\mathcal{R}_{0_{\rm BZ}}^{\rm out}$, the contribution of transmission routes is also important.

In spite of the increasing cumulative infections, the $Z_B$ route produces fewer new infections compared to the $B$ route in each case (Fig.~\ref{fig3-rel_cont}(b)(c)). In fact, the epidemic peak's size and timing are primarily determined by the $B$ route, which is responsible for the majority of new infections (Fig.~\ref{fig3-rel_cont}(c)). Increasing $\mathcal{R}_{0_{Z}}^{\rm out}$ leads to reduced density of $B$ cells which results in a lower peak of infections. This occurs because, when $\displaystyle \mathcal{R}_{0_{\rm BZ}}^{\rm out}=\mathcal{R}_0^{\rm out}$, (i.e. $\beta_z=\displaystyle \frac{ch_zd_z}{h_bd_b}\beta$, black line in Fig.~\ref{fig2-beta_vs_betaz}(a)), we always have $\beta_z<\beta$ as $d_z<d_b$ and thus, the $Z_B$ route always provides a lower force of infection than the $B$ route.

\subsubsection{Nature of outbreak trajectories}
In this section, we investigate the nature of the outbreak trajectories within region \boxed{1} of Fig.~\ref{fig2-beta_vs_betaz}(a) while varying the bacteria-zooplankton association rate ($\sigma$) for different zooplankton-mediated transmission rate ($\beta_z$). For high $\beta_z = 0.08$, an increased $\sigma$ leads to both earlier (EGR increases) and higher peaks compared to $\sigma=0$ case (Fig.~\ref{fig4-sigma_vs_all}(b),(c),(e) and Fig.~\ref{FigS2-sigma_vs_EGR} in Appendix). When $\beta_z$ is low ($\beta_z=0.04$), a delayed and lower peak is observed. However, for intermediate $\beta_z$ (=0.05), while the peak is delayed, the number of infections at peak increases (Fig.~\ref{fig4-sigma_vs_all}(b-c),(f-g)). The decrease in EGR with increasing $\sigma$ for both $\beta_z = 0.04$ and $\beta_z=0.05$ explains the delayed dynamics in these scenarios (see Fig.~\ref{FigS2-sigma_vs_EGR} in Appendix). Since an outbreak with a delayed and lower peak can still result in a larger outbreak size (due to higher $\mathcal{R}_{0_{\rm BZ}}^{\rm out}$ at increased $\sigma$, Fig.~\ref{fig4-sigma_vs_all}(a)), it may not be easy to predict the outbreak size based on peak value and peak timing when zooplankton-mediated transmission is involved (Fig.~\ref{fig4-sigma_vs_all}(g)). For all of the above scenarios, the epidemic duration increases noticeably, compared to the $\sigma=0$ case (Fig.~\ref{fig4-sigma_vs_all}(d)).

Under fixed $\sigma=0.03$, an increased $\beta_z$ leads to an increased duration of epidemic in spite of earlier and larger peak value (Fig.~\ref{fig4-sigma_vs_all} and Fig.~\ref{figS3-time_series} in Appendix). This is due to the fact that increasing $\beta_z$ results in a considerably slower asymptotic convergence of the infection trajectory to the disease-free state after the peak infections, thus prolonging the duration of the outbreak (for example, see Fig.~\ref{fig4-sigma_vs_all}(e)). This is in contrary to the usual notion whereby an epidemic trajectory with a higher reproduction number should results in a relatively larger and faster peak followed by a quicker decline. This observation underscores the importance of zooplankton in promoting pathogen persistence during inter-epidemic periods by serving as a \textit{V. cholerae} reservoir. The maintenance of lower-level infections for a long period continues to influence the number of infections after the peak or equivalently after achieving the herd immunity threshold. In this context, the next subsection explores the epidemic overshoot phenomenon, which accounts for outbreak severity in the post-peak phase.

\begin{figure}[H]
\begin{center}
\includegraphics[width=0.99\textwidth]{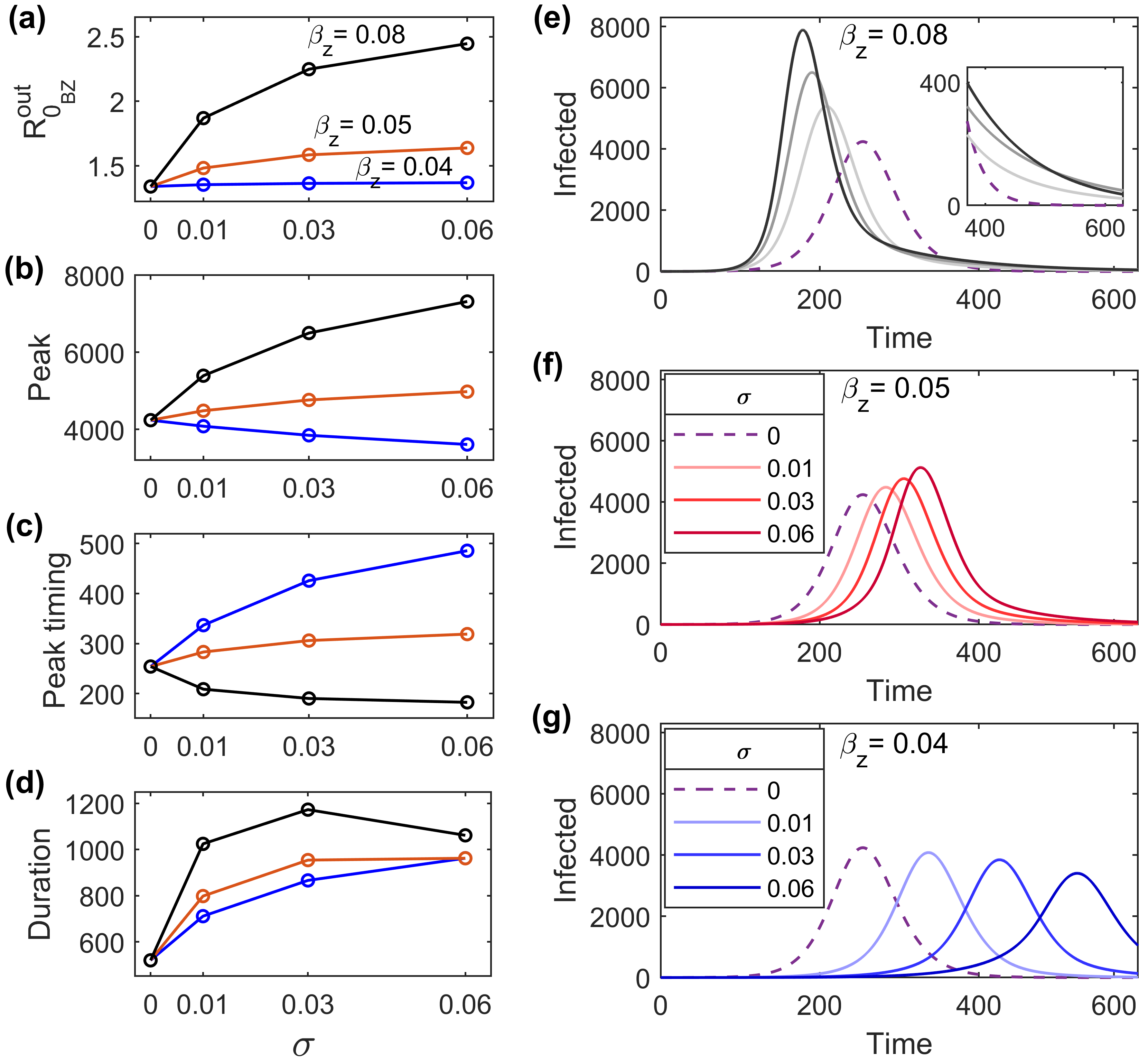}
\end{center}
\caption{$\mathcal{R}_{0_{\rm BZ}}^{\rm out}$ (a), peak infections (b), peak timing (c), epidemic duration (d) and the outbreak trajectories (e-g) under varying bacteria-zooplankton association rate ($\sigma$) for different transmission rates via zooplankton ($\beta_z$) within region \boxed{1} of Fig.~\ref{fig2-beta_vs_betaz}(a). 
}
\label{fig4-sigma_vs_all}
\end{figure}

\begin{figure}[H]
\begin{center}
{\includegraphics[width=0.99\textwidth]{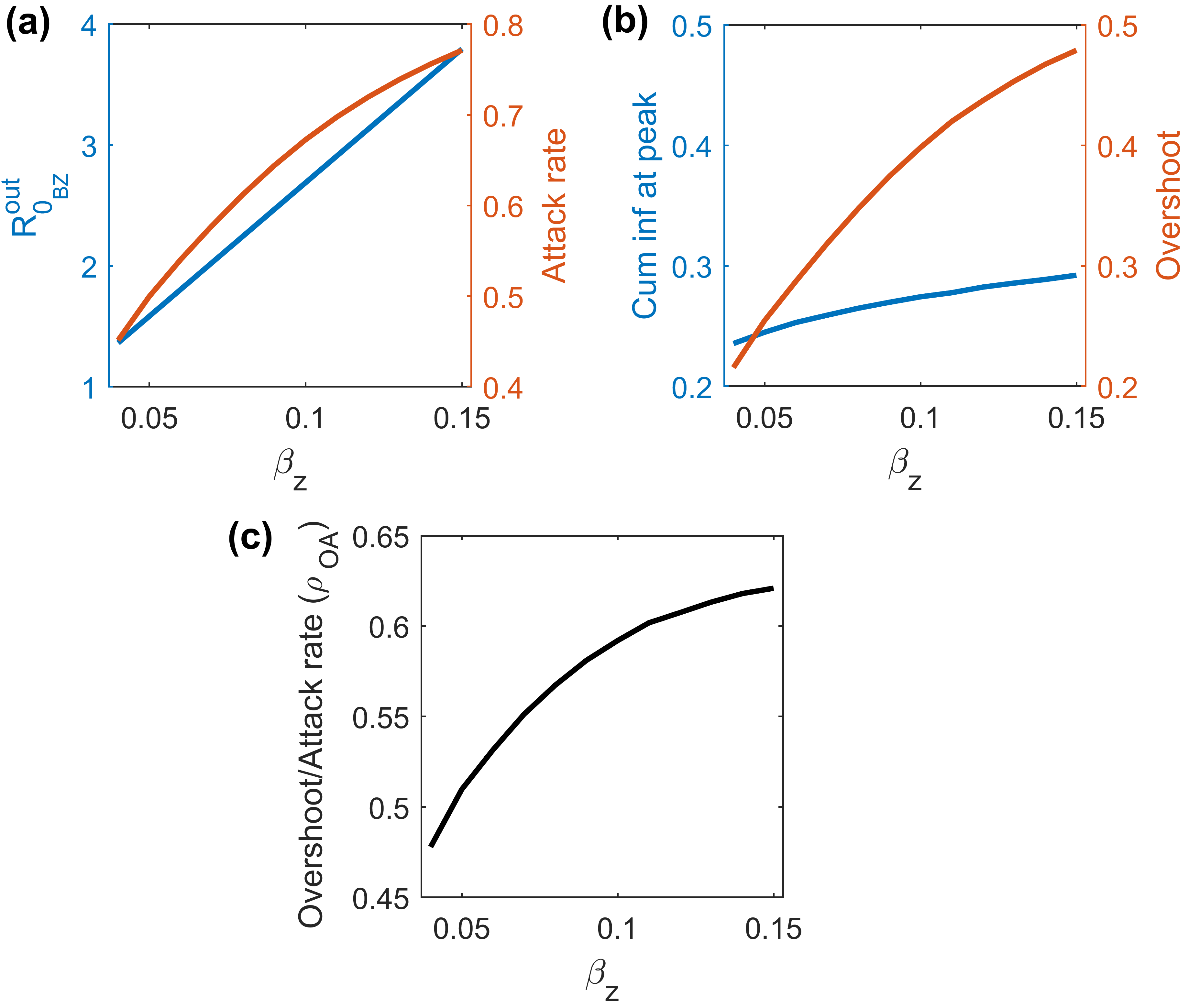}}
\end{center}
\caption{Effect of zooplankton-mediated transmission ($\beta_z$) on (a) $\mathcal{R}_{0_{\rm BZ}}^{\rm out}$ and attack rate, (b) cumulative infections at peak (as proportion of the total population) and overshoot, and (c) ratio of overshoot to attack rate ($\rho_{\rm OA}$). Parameter: $\sigma=0.03$.
}
\label{fig5-betaz_vs_overshoot}
\end{figure}

\subsubsection{Epidemic overshoot}

Epidemic overshoot refers to the proportion of the population that becomes infected after the peak of infection has passed, i.e. during the post-peak phase. It is equivalent to the difference between the herd immunity threshold and the attack rate (i.e. proportion of the infected population) \citep{handel2007best}. Notably, the ratio of overshoot to attack rate ($\rho_{\rm OA}$) is an important metric for quantifying the fraction of total infections occurring at the overshoot phase \citep{nguyen2023fundamental}. In the absence of bacteria-zooplankton association ($\sigma=0$), in line with a simple SIR system, higher transmission rates from the free-living bacterial route ($\beta$) always burn a significant portion of the population before reaching peak prevalence, leading to a sharp rise in infections and leaving fewer susceptible individuals for the overshoot phase. As a result, there always exists an upper bound on the overshoot with respect to $\beta$, and also the ratio of overshoot to attack rate ($\rho_{\rm OA}$) decreases monotonically as $\beta$ increases in case of $\sigma=0$ (see Fig.~\ref{fig8-app_beta_vs_overshoot} in Appendix).

In presence of bacteria-zooplankton association ($\sigma\neq 0$), increasing transmission via zooplankton ($\beta_z$) for a fixed $\beta$ alters the above observations. We find that, along with cumulative infections at the peak and the attack rate, the overshoot also increases substantially across a wide range of $\beta_z$ (see Fig.~\ref{fig5-betaz_vs_overshoot}(a),(b)). Interestingly, it appears that the increase in the overshoot due to $\beta_z$ profoundly surpasses the rise in cumulative infections at the peak (solid red and blue line Fig.~\ref{fig5-betaz_vs_overshoot}(b)). For instance, $50\%$ increase in $\beta_z=0.05$ results in $7\%$ and $30\%$ increase in cumulative infections at peak and the overshoot, respectively (Fig.~\ref{fig5-betaz_vs_overshoot}(b)). Consequently, the ratio of overshoot to attack rate ($\rho_{\rm OA}$) increases significantly over a wide range of $\beta_z$, which is contrary to what is usually observed in SIR systems (see Fig.~\ref{fig5-betaz_vs_overshoot}(c)). This result is particularly interesting, indicating a substantial overshoot in cases where transmission via zooplankton is more pronounced. Therefore, alongside transmission from the free-living bacterial route, even moderate transmission via zooplankton infects a larger portion of the susceptible population in the post-peak phase by sustaining lower-level infections over an extended period before eventually fading out. This underscores that maintenance of control measures during the post-peak phase becomes crucial to prevent additional overshoot. It should be noted that extremely large $\beta_z$ values (for which $\mathcal{R}_{0_{\rm BZ}}^{\rm out}$ may not be feasible for cholera (Eq.~\eqref{outbreak_cond})) cause both the overshoot and $\rho_{\rm OA}$ to decline again (not shown here).

\subsection{Disease management via water filtration}\label{sec:result_filtration}

In the Bengal delta region, numerous studies have documented seasonal fluctuations in \textit{Vibrio} cells attached to zooplankton, with the highest concentrations observed during the early spring and summer, coinciding with the peak zooplankton populations \citep{de2011role, huq2005critical, jutla2012satellite, shackleton2024mechanisms}. We note that our model, by explicitly considering the various ecological time scales involved, presents an opportunity to incorporate seasonality and its connection to disease dynamics and potential control measures. To qualitatively mimic the emergence of cholera infections preceded by biannual plankton blooms, we consider the phytoplankton carrying capacity ($K$) to be a periodic function given by  
\begin{equation*}
    K(t)=K_0(1+d \sin{(2\pi t/p)}).
\end{equation*}
Here $K_0$ stands for the baseline phytoplankton carrying capacity, while $d$ and $p$ denote the amplitude and period of oscillation, respectively. This assumption is reasonable as the carrying capacity depends on fluctuations in climatic and environmental factors such as coastal sea surface temperature, nutrient load, salinity, pH, flooding and streamflow \citep{shackleton2024mechanisms, freund2006bloom, d2012algal, baracchini2017seasonality, Codeco01, righetto2012role, kolaye2019mathematical}. In this case, we also consider human demographic and immunity factors ($\Lambda, \mu, \omega \ne 0$).



The copepod-associated \textit{Vibrio cholerae} cells remains viable even after treatment of household water with chemical disinfectants like alum and chlorination \citep{chowdhury1997effect}. However, a sari cloth folded four times can remove up to $99\%$ bacteria-associated zooplankton cells, as demonstrated in the laboratory-based study by \cite{huq1996simple}. Furthermore, an experimental study conducted in Matlab, Bangladesh, by \cite{colwell2003reduction} found that filtration using simple sari and nylon cloths can remove around $90\%$ of bacteria cells attached to copepods. Hence simple filtration of household water can be a more effective method of controlling cholera compared to chemical-based purification. This is particularly relevant in adverse situations including humanitarian crises and climate-driven extreme weather events, where access to clean water is limited, making an inexpensive, easy-to-use, and socially acceptable household water treatment method like filtration essential \citep{venkatesan2024new, huq2010simple}. 
Additionally, the waning of vaccine-induced immunity, combined with the ongoing critical shortage of oral vaccines (OCV) further reinforces the need for household control measures \citep{cholera12, enserink2025cradle}.

Filtration is a point-of-use (POU) water purification method, typically applied when collecting water from reservoirs such as lakes and ponds \citep{sobsey2008point, taylor2015impact, colwell2003reduction}. By removing planktonic organisms, filtration effectively reduces zooplankton concentration and thus the infectious dose in household water \citep{fung2014cholera, sobsey2008point}. This, in turn, directly impacts the force of infection via the zooplankton-driven transmission route. Other POU measures, such as boiling and household chlorination \citep{fewtrell2005water}, are not considered here, as our focus is on assessing the impact of controlling infection through the zooplankton-mediated route. To model filtration, we modified the zooplankton-driven force of infection term as follows:
\begin{equation}
\displaystyle FI_Z(t)=\begin{cases}
\displaystyle\beta_z \frac{(1-e_f)Z_B(t)}{h_z+(1-e_f)Z_B(t)}, \qquad &\text{during filtration}\\\\
\displaystyle\beta_z \frac{Z_B(t)}{h_z+Z_B(t)} \qquad\qquad &\text{else.}
     \end{cases}
     \label{filtration}
\end{equation}
Here $e_f$ stands as the filtration efficacy, which quantifies the proportion of dispelled zooplankton cells and depends on the procedure's accuracy and the mesh size of the used material. For instance, sari cloths appeared to be more effective than nylon nets in removing copepods \citep{colwell2003reduction}. Note that we assume an instantaneous effect of filtration on $\displaystyle FI_Z(t)$, which is reasonable as the water is typically collected on a daily basis for household use.

An important question is when to initiate filtration and how long the control measure should last. Notably, plankton blooms typically serve as an early warning indicator for cholera outbreaks, with a lag of nearly 8 weeks between plankton blooms and the rise in cholera infections, as observed in the Bengal Delta region \citep{de2008environmental, huq2005critical}. So, one can think of two indicators for initiation of filtration: the occurrence of phytoplankton blooms, which are mostly recognizable in water reservoirs, and an increase in cholera infections. Since plankton blooms generally last for about three months, we focus on implementing periodic filtration with same duration. We assess the outcome of filtration by measuring the reduction in the number of infections over a year, compared to the situation without filtration. Implementing filtration measures based on the two indicators implies different initiation times, which may lead to differences in cholera case reduction. We also analyze the impact of filtration initiated at all time points, while maintaining the same duration as above for each case.

\begin{figure}[h!]
    \centering
    \includegraphics[width=0.99\textwidth]{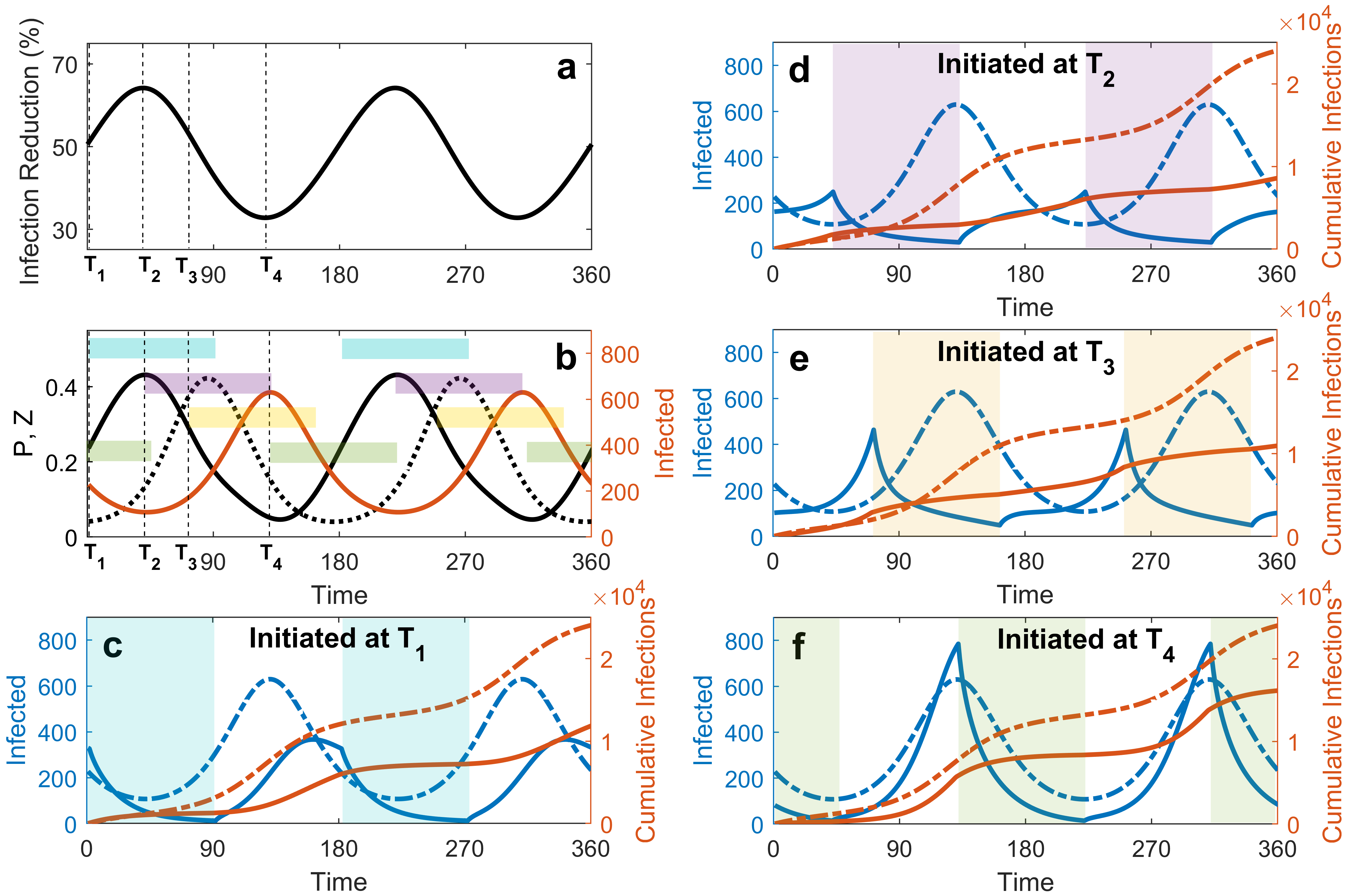}
    \caption{ Effect of 90-day periodic filtration on reducing cholera infections typically driven by seasonal biannual plankton blooms. (a) shows the percentage of infection reduction (solid line) for different initiation timings. Here $T_1$-$T_4$ represents four distinct initiation timings. (b) Illustrates potential indicators for initiating filtration, based on phytoplankton density (solid black line) and subsequent increases in infections (red line) along with the zooplankton abundance (dotted black line). The scenario when filtration is initiated at $T_1$ following the phytoplankton bloom (c), at $T_2$ when zooplankton abundance is increasing (d) (best-case scenario), at $T_3$ in response to an increasing trend in infections (e) and at $T_4$ when zooplankton abundance is decreasing (f) (worst-case scenario).
    In (c)-(f), active (blue) and cumulative (red) infections are shown for both filtered (solid lines) and unfiltered (dashed lines) scenarios. The Shaded regions indicate the filtration periods. Here we consider filtration with $80\%$ efficiency (i.e. $e_f=0.8$), and the amplitude of oscillation $d=0.8$ of phytoplankton carrying capacity.
    }
    \label{fig6-filtration}
\end{figure}

The overall effect of filtration on reducing cholera infection at different initiation timings is illustrated in Fig.~\ref{fig6-filtration}. The solid line in Fig.~\ref{fig6-filtration}(a) indicates the percentage reduction in cholera infections associated with different filtration initiation timings (dashed vertical lines). Figure~\ref{fig6-filtration}(b) depicts two filtration indicators: the phytoplankton density (solid black line) and the subsequent rise in infections (red line), alongside zooplankton abundance (dotted black line). Four different filtration initiation timings, $T_1$-$T_4$ are considered among which $T_1$ and $T_3$ follow the former and the latter indicator, respectively (Fig.~\ref{fig6-filtration}(a)(b)). Additionally, the active and cumulative infections under filtration corresponding to timings $T_1$-$T_4$, along with the unfiltered scenario, are shown in Fig.~\ref{fig6-filtration}(c)-(f), respectively. We observe that, when filtration is employed with an efficacy of $80\%$ (i.e. $e_f = 0.8$), the reduction in cholera infections varies from approximately $32\%$ to $65\%$, depending on the timing of initiation (Fig.~\ref{fig6-filtration}(a)). Filtration initiated at $T_1$ following a phytoplankton bloom reduces around $50\%$ infections (Fig.~\ref{fig6-filtration}(c)), which is slightly less effective than the $54\%$ reduction observed when filtration is initiated at $T_3$ in response to increasing cholera infections (Fig.~\ref{fig6-filtration}(e)). 

The above results can be explained by tracking the formation and ingestion of $Z_B$ in our simulations.
Filtration initiated at $T_1$ ends when $Z$ abundance is close to its peak, leading to considerable formation of  $Z_B$ which are then ingested (Fig.~\ref{fig6-filtration}(c), Fig.~\ref{fig9-app_filtration}(a) in Appendix). On the other hand, when filtration is initiated at $T_3$, in spite of significant $Z_B$ formation, it can lead to relatively less ingestion during the high abundance period (Fig.~\ref{fig6-filtration}(e), Fig.~\ref{fig9-app_filtration}(c) in Appendix). Filtration initiated at $T_2$, which continues during periods of high $Z$ abundance, effectively restricts both formation and ingestion of $Z_B$, appears to be the best-case scenario (Fig.~\ref{fig6-filtration}(d), Fig.~\ref{fig9-app_filtration}(b) in Appendix). Hence, in this case, infections remain relatively low throughout the year (nearly $65\%$ reduction in Fig.~\ref{fig6-filtration}(d)). However, it is worth noting that, due to the difficulty in tracking zooplankton concentrations in the water column, this may not be a very feasible recommendation. The worst-case scenario arises from delayed filtration initiated near the infection peak at $T_4$, which fails to restrict both formation and ingestion of $Z_B$ (Fig.~\ref{fig9-app_filtration}(d) in Appendix). In this case, a rapid peak, even larger than the uncontrolled one arises although with a reduced total infection over the year (nearly $32\%$ reduction in Fig.~\ref{fig6-filtration}(f)). Overall, we observe that the timing of implementing filtration can play a major role in reduction of cholera infections over the year.


\section{Discussion}

Cholera remains a reemerging disease of poverty for nearly 2 billion people in over 50 countries who have inadequate access to safe water and poor sanitation infrastructure \citep{amisu2024cholera}. 
Despite the well-documented correlation between plankton blooms and cholera outbreaks over the past decades, the impact of plankton ecology on cholera dynamics has not yet been fully explored from a mathematical modeling perspective. To this aim, we integrate phytoplankton-zooplankton interactions into the classical human-bacteria (SIRB) cholera model through the ecological commensal association between \textit{V. cholerae} and zooplankton. This could lead different dynamics of cholera outbreak which has remain unexplored until now. For instance, an outbreak that might initially decay without the bacteria-zooplankton association can grow when the transmission via zooplankton is involved. On the other hand, this association can also lead to decline in initial infections which would have grown otherwise. Additionally, there will be scenarios where transmission via zooplankton increases the spread of the disease, i.e., $\mathcal{R}_{0_{\rm BZ}}^{\rm out}$ (Fig.~\ref{fig2-beta_vs_betaz}). 

The basic reproduction number ($\mathcal{R}_{0_{\rm BZ}}^{\rm out}$) alone may not provide complete picture about outbreak trajectories or disease severity. The relative contribution of transmission routes, influenced by ecological, environmental, and climatic factors, plays a significant role in determining the progression and impact of an outbreak (Fig.~\ref{fig3-rel_cont}). Although the zooplankton route is responsible for fewer infections compared to the free-living bacterial route, the outbreak persists longer in the presence of bacteria-associated zooplankton cells. The colonization of zooplankton by \textit{V. cholerae} cells prior to human exposure makes the zooplankton route a delayed transmission pathway. This is consistent with the the results of \cite{tien2010multiple}. For a fixed reproduction number, the dominant free-living bacterial route drives rapid outbreak growth with a higher peak. In contrast, even moderate transmission via the zooplankton route results in a comparatively slower outbreak progression with a reduced peak. However, this leads to a higher herd immunity threshold, larger final size, and longer outbreak duration. This finding aligns with the contrasting epidemiological patterns of cholera: the Ganges delta region experiences consistent, longer outbreaks involving reservoir transmission, while the African region often reports shorter, sporadic outbreaks without reservoir transmission, as noted by \cite{sack2021contrasting}. This result emphasizes the consideration of relative contributions of transmission routes while designing interventions. Also, it underscores the importance of maintaining control measures despite a slower initial growth of the outbreak, particularly in regions with evidence of \textit{V. cholerae} reservoirs. 

Even when the basic reproduction number ($\mathcal{R}_{0_{\rm BZ}}^{\rm out}$) increases due to zooplankton-mediated transmission, we can observe both an early or a delayed peak, depending on the different rates of the bacteria-zooplankton association. While the former is the usual expectation, the latter is unintuitive. However, in both cases, a larger outbreak size is reached due to prolonged infections at lower levels during the post-peak phase (Fig.~\ref{fig4-sigma_vs_all}). This highlights the role of environmental reservoirs in prompting the inter-epidemic persistence of pathogens, as noted in previous studies \citep{sack2021contrasting, lutz2013environmental, vezzulli2010environmental}. In fact, the increase in overshoot may surpass the rise in cumulative infections at the peak when the transmission rate via zooplankton increases (Fig.~\ref{fig5-betaz_vs_overshoot}). This finding highlights the possibility of substantial overshoot in regions with the evidence of \textit{V. cholerae} reservoirs, such as the Ganges delta region, and underscores the need to maintain control measures during the post-peak period. Relaxing control measures beyond the peak of infections, even after achieving herd immunity, could potentially lead to additional overshoot.

Lastly, we also study disease management via practical control measure such as water filtration, which has previously been deemed effective to reduce cholera in endemic regions like Matlab, Bangladesh \citep{colwell2003reduction}. Our findings indicate that the timing of implementing such a control measure can be a key to a substantial (approximately $32-65\%$) reduction in cholera infections (Fig.~\ref{fig6-filtration}). These observations highlight that a deeper understanding of the \textit{Vibrio}-plankton ecology could aid in developing signaling methods for effective filtration in regions endemic to cholera. Although the presence of environmental reservoirs of \textit{V. cholerae} in coastal ecosystems and their associated ecology complicate pathogen eradication, cholera endemicity can still be controlled with simple yet timely interventions.

Increased climate variability with ongoing global change and subsequent extreme weather events can influence pathogen ecology, thereby highlighting the need to integrate the same while studying disease dynamics. Our model takes a first step in this direction in the context of cholera and provides qualitative insights into transmission dynamics which could be useful to inform public health policies. Despite the usefulness of our model, it has few limitations. While some studies indicate that \textit{V. cholerae} can also associate with phytoplankton \citep{phyto2008seeligmann, islam2015role}, we choose to neglect this aspect in our study. This assumption can also be justified in line with the empirical studies \citep{turner2009plankton,huq2005critical}, where an increase in \textit{V. cholerae} concentration has reportedly corresponded to a significant decrease in phytoplankton abundance driven by higher nutrient availability and reduced level of antibacterial metabolites. Additionally, we do not take into account the rapid growth of \textit{V. cholerae} \citep{lutz2013environmental,huq1983ecological} and the increased pathogen virulence \citep{walther2004pathogen, vezzulli2010environmental} while attaching to zooplankton. Incorporating these ecological aspects in future cholera research could enhance our understanding of disease dynamics. The need of the hour is to shift the focus toward integrating the ecology of pathogens and their response to changing environment while predicting disease outbreaks.

\appendix
\section{Parameter definitions and calibration}\label{param}
The model summarized by Eq.~\eqref{final_model} comprises a complex dynamical system involving organisms that vary across many orders of magnitude in size and have similarly diverse time scales associated with their population dynamics. It is important, therefore, to be precise in defining the values and units of all the model parameters under consideration (see Table~\ref{table-1}). As the likelihood of exposure to zooplankton-contaminated water is lower than that of bacteria-contaminated water, we assume the transmission rate via zooplankton ($\beta_z$) is less than or equal to the transmission rate via free-living bacteria ($\beta$). Moreover, we consider the mean dry weight of zooplankton to be approximately 2 $\mu g$ \citep{dumont1975dry}, with the smaller size copepods being more likely to pass through unfiltered water. A single copepod carries around $10^5$ \textit{V. cholerae} cells \citep{de2008environmental, colwell1992ecology, colwell1996viable}, resulting in a colonization coefficient, $c$, for the bacteria-zooplankton association of $10^5/0.002=5\times 10^7$ cells/(mg~dw). Furthermore, the free-living bacterial transmission reaches half of its maximum at the concentration $h_b=10^9$ \textit{V. cholerae} cells/L \citep{hartley2005hyperinfectivity}. So, the half-saturation constant for transmission via zooplankton, $h_z$, becomes $10^9/(5\times 10^7) = 20$ (mg~dw)/L. Notably, the parameter $h_z$ is inversely proportional to the bacterial colonization coefficient $c$. Additionally, in line with the empirical study \citep{lipp2002effects}, we assume that bacteria attach to zooplankton at a relatively lower concentration than $h_b$, with the maximum attachment rate occurring at a concentration that is 1/100th of $h_b$. This gives the half-saturation constant for $B$-$Z$ association $h_m=10^7$ cells/L.


\section{Positive Invariance and boundedness}
\begin{lemma}
The solutions (S(t),I(t),R(t),B(t),Z(t),P(t)) of the system~(2.1) are uniformly and ultimately bounded on\\ 
$\displaystyle\Omega=\Big\lbrace(S,I,R,B,Z_B, Z_F,P)\in \mathbb{R}_+^{7}| ~0\leq S,I,R\leq \frac{\Lambda}{\mu}, 0\leq B\leq \frac{\xi\Lambda}{\mu d_b}, 0\leq Z+\eta P \leq \frac{K\eta (r_p+\alpha)^2}{4r_p \alpha}\Big\rbrace$.
\end{lemma}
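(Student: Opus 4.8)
The plan is to establish the two assertions of the lemma in turn. \textbf{Positive invariance.} First I would check the sign of the vector field on each coordinate face of $\mathbb{R}_+^7$, which suffices because the field is locally Lipschitz (so solutions exist and are unique) and the faces are the only places a trajectory could exit. On $\{S=0\}$ one has $\dot S=\Lambda+\omega R\ge 0$; on $\{I=0\}$, $\dot I=\beta SB/(h_b+B)+\beta_z SZ_B/(h_z+Z_B)\ge 0$; on $\{R=0\}$, $\dot R=\gamma I\ge 0$; on $\{B=0\}$, $\dot B=\xi I\ge 0$; on $\{Z_B=0\}$, $\dot Z_B=\sigma BZ_F/(h_m+B)\ge 0$; on $\{Z_F=0\}$, $\dot Z_F=\eta\alpha PZ_B/(h_p+P)\ge 0$; and $\{P=0\}$ is invariant since $\dot P=0$ there. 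Hence no trajectory that starts nonnegative can leave $\mathbb{R}_+^7$.

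\textbf{Host and bacterial bounds.} Next I would add the first three equations: in $N=S+I+R$ the transmission, recovery and immunity-loss terms cancel, leaving $\dot N=\Lambda-\mu N-\delta I\le\Lambda-\mu N$. A comparison (Gronwall) argument then yields $\limsup_{t\to\infty}N(t)\le\Lambda/\mu$, and nonnegativity of $S,I,R$ promotes this to the individual bounds $S,I,R\le\Lambda/\mu$. Substituting $I\le\Lambda/\mu$ into the $B$-equation and discarding the nonnegative association loss $c\sigma BZ_F/(h_m+B)$ gives $\dot B\le\xi\Lambda/\mu-d_bB$, so $\limsup_{t\to\infty}B(t)\le\xi\Lambda/(\mu d_b)$.

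\textbf{Plankton bound.} The crux is the plankton estimate, which exploits the cancellation that also underlies the commensal decoupling noted in the text. Setting $W=Z+\eta P$ with $Z=Z_F+Z_B$, the grazing terms $\pm\eta\alpha PZ/(h_p+P)$ and the association fluxes $\pm\sigma BZ_F/(h_m+B)$ all cancel, leaving the clean relation
\[ \dot W=\eta r_pP\left(1-\tfrac{P}{K}\right)-d_zZ. \]
I would then compare $W$ against a linear ODE: for an auxiliary rate $v>0$,
\[ \dot W+vW=\eta P\left(r_p+v-\tfrac{r_p}{K}P\right)+(v-d_z)Z, \]
and choosing $v$ so that $(v-d_z)\le 0$ lets me discard the nonnegative term in $Z$ and bound the remaining downward parabola in $P$ by its vertex value $\eta K(r_p+v)^2/(4r_p)$. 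The comparison principle then delivers $\limsup_{t\to\infty}W(t)\le\eta K(r_p+v)^2/(4r_pv)$; since $P,Z_B,Z_F\ge 0$, this bound on $W$ transfers to each plankton component individually.

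\textbf{Main obstacle.} I expect the delicate point to be reconciling the explicit constant $K\eta(r_p+\alpha)^2/(4r_p\alpha)$ with this derivation. The parabola maximization is routine, but the comparison step is legitimate only while the coefficient $(v-d_z)$ of $Z$ retains the correct sign, which constrains the admissible auxiliary rate; I would therefore verify carefully which $v$ is permissible under the parameter ordering in Table~\ref{table-1} and confirm that it reproduces the stated constant. Everything else---the boundary sign checks and the host and bacterial comparisons---is routine once the grazing-term cancellation has been identified.
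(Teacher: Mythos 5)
Your proposal is correct and follows essentially the same route as the paper's own proof: boundary sign checks for positive invariance, comparison arguments for $N=S+I+R$ and then $B$, and the auxiliary-rate trick applied to $W=Z+\eta P$ (the paper calls it $M$) with the grazing and association fluxes cancelling, the $Z$-coefficient forced nonpositive, and the parabola bounded at its vertex. The obstacle you flag is real and is in fact a flaw of the paper's own write-up: its proof recycles the symbol $\alpha$ for the auxiliary rate and requires $\alpha<d_z$, which clashes with the model's predation rate $\alpha=0.4>d_z=0.06$ in Table~\ref{table-1}, so the constant $K\eta(r_p+\alpha)^2/(4r_p\alpha)$ in $\Omega$ is only justified when $\alpha$ is read as your admissible auxiliary rate $v\le d_z$, exactly the constraint you identified.
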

\begin{proof}
Model (2.1) can be expressed as the form 
\begin{eqnarray*}
	\frac{dX}{dt} & =A(X(t)), ~~X(0)=X_0\geq 0,
	\end{eqnarray*}
where $X=[S,I,R,B,Z_B,Z_F,P]^T$ and $ A(X(t))=(A_1(X),A_2(X),\dots,A_7(X))^T$.\\
Now, we have
\begin{align*}
	\frac{dS}{dt}|_{S=0}&=\Lambda + \omega R \geq 0, ~~\frac{dI}{dt}|_{I=0}=\frac{\beta SB}{h_b+B} + \frac{\beta SZ_B}{h_z+Z_B}\geq 0,\\[1.5ex] 
    \frac{dR}{dt}|_{R=0}&=\gamma I\geq 0,
~~\frac{dB}{dt}|_{B=0}=\xi I \geq 0, ~~\frac{dZ_B}{dt}|_{Z_B=0}= \sigma\frac{BZ_F}{h_m+B} \geq 0,\\[1.5ex]
\frac{dZ_F}{dt}|_{Z_F=0}&= \eta\frac{\alpha PZ_B}{h_p+P} \geq 0,	~~\frac{dP}{dt}|_{P=0}= 0.
	\end{align*}
Based on the lemma from \cite{yang1996permanence}, it follows that $\mathbb{R}_+^{7}$ is an invariant set. Consequently, any trajectory of system (2.1) that originates in $\mathbb{R}_{+}^7$ will remain within this domain for all time.

\noindent Let $N(t)=S(t)+I(t)+R(t)$ be the total human population at any given instant of time t. Adding first three equations of the system~(2.1) we get,
$\displaystyle \frac{dN}{dt}=\Lambda - \delta I -\mu N$. This implies that, $\displaystyle \frac{dN}{dt}\leq \Lambda -\mu N.$ Therefore by standard comparison theorem, there exists $t_1 \geq 0$ such that $\displaystyle N(t)\leq \frac{\Lambda}{\mu},$ for all $t\geq t_1$.\\

\noindent Now from the fourth equation of Eq.~(2.1), $\displaystyle \frac{dB}{dt}\leq \xi I - d_{b}B$. Again by comparison theorem, there exists $t_2 \geq 0$ such that $\displaystyle B(t)\leq \frac{\xi\Lambda}{\mu d_b}$ for all $t\geq t_2$.\\

\noindent To show the boundedness of the remaining three compartments, let $M=Z+\eta P=Z_B+Z_F+\eta P$. 

\noindent Then, $\displaystyle \frac{dM}{dt} = -d_zZ_B +\eta r_{p} P(1-\frac{P}{K}) - d_{z}Z_F$. 

\noindent Now let $\alpha>0$. Then $\displaystyle \frac{dM}{dt} +\alpha M = \eta r_{p} P(1-\frac{P}{K})+\alpha \eta P+ (\alpha - d_{z})Z_F +(\alpha -d_z) Z_B.$ We choose $\alpha$ such that $\alpha < d_z$. Then $\displaystyle \frac{dM}{dt} +\alpha M \leq {\rm max}\Big\{\eta r_{p} P(1-\frac{P}{K})+\alpha \eta P \Big\}=\frac{K\eta(r_p + \alpha)^2}{4 r_p}$. Again by comparison theorem, there exists $t_3 \geq 0$ such that $\displaystyle M(t)\leq \frac{K \eta(r_p + \alpha)^2}{4 r_p \alpha},$ for all $t\geq t_3$.
\end{proof}

\section{The phytoplankton-zooplankton model}\label{appendix-phyto-zoo model}
Considering the total zooplankton density as $Z(t)=Z_F(t)+Z_B(t)$, from the last three equations of system~(2.1), we have the phytoplankton-zooplankton equation as
\begin{eqnarray}
\begin{array}{llll}
\displaystyle \frac{d Z}{dt} &=& \displaystyle \eta \frac{\alpha PZ}{h_p +P} -d_zZ,\\\\
\displaystyle \frac{d P}{dt} &=& \displaystyle r_pP(1-\frac{P}{K})-\frac{\alpha PZ}{h_p+P}.
\end{array}
\label{phyto_zoo_model}
\end{eqnarray}
Here, the independence of system~\eqref{phyto_zoo_model} from the human-bacteria (SIRB) dynamics reflects the commensal interaction between \textit{Vibrio cholerae} and zooplankton. The system (\ref{phyto_zoo_model}) has three equilibrium solutions: trivial $S_0=\displaystyle(0, 0)$, zooplankton-free $S_1=\displaystyle(0, K)$ and zooplankton-phytoplankton co-existence $S^*=\displaystyle(Z^*, P^*)$ equilibrium. Where the $P^*$ and $Z^*$ are given by the solution of the following two equations:
\begin{equation*}
    \displaystyle \eta \frac{\alpha P^*Z^*}{h_p +P^*} -d_zZ^*=0 ~~\text{and}~~
\displaystyle r_pP^*(1-\frac{P^*}{K})-\frac{\alpha P^*Z^*}{h_p+P^*}=0.
\end{equation*}
Solving the above equations we get,
\begin{eqnarray}
\begin{array}{llll}
\displaystyle P^*=\displaystyle \frac{d_zh_p}{\eta \alpha-d_z}, ~Z^*=\displaystyle \frac{r_p}{\alpha}(1-\frac{P^*}{K})(h_p+P^*)\\
\end{array}
\label{P_and_Z}
\end{eqnarray}
Here, $\displaystyle C_P =\frac{\eta\alpha}{d_z} > 1$ is the necessary condition for the existence of positive phytoplankton density. Also, $\displaystyle C_Z =\frac{KC_P}{(h_p+K)} > 1$ with $\displaystyle C_P > 1$ is the necessary condition for co-existence equilibrium density. Now, the local stability of the equilibrium solutions for the system (\ref{phyto_zoo_model}) follows the lemma below.

\begin{lemma}
The trivial equilibrium solution $S_0= \displaystyle(0, 0)$ of the system (\ref{phyto_zoo_model}) always exists and is unstable. Zooplankton-free equilibrium solution $S_1=\displaystyle(0, K)$, which exists and stable when $\displaystyle C_Z<1$ with $\displaystyle C_P>1$ and unstable for $\displaystyle C_Z>1$. Also, the co-existence equilibrium solution $S^*=\displaystyle(Z^*, P^*)$ exists when $\displaystyle C_Z>1$. Moreover, for $\displaystyle C_Z>1$, there exists a critical value of $K^c$,
\begin{equation}
   \displaystyle K^c=\displaystyle\frac{h_p(\eta\alpha+d_z)}{(\eta\alpha-d_z)}
   \label{K^c}
\end{equation}
such that the co-existence equilibrium solution $S^*$ is\\
(i) stable when $K<K^c$,\\
(ii) center when $K=K^c$,\\
(iii) unstable associated with the appearance of bounded periodic solutions with initial amplitude and period of $\displaystyle\exp{\Big(\frac{r_p}{2 C_Z}-\frac{r_ph_p}{K(C_P-1)}\Big)}$ and\\
$\displaystyle 4\pi / \displaystyle \sqrt{4r_pd_z\big(1-\frac{1}{C_Z}\big)- r_p^2\Big(\frac{1}{C_Z}-\frac{2r_ph_p}{K(C_P-1)}\Big)^2}$, respectively  for $K>K^c$.
\end{lemma}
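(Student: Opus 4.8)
The plan is to base everything on the Jacobian of system~\eqref{phyto_zoo_model}, evaluated at each equilibrium, and to exploit a structural simplification that occurs at the co-existence point. Writing the right-hand sides as $F_1=\eta\alpha PZ/(h_p+P)-d_zZ$ and $F_2=r_pP(1-P/K)-\alpha PZ/(h_p+P)$ in the ordering $(Z,P)$, I would first record
\[
J=\begin{pmatrix}\eta\dfrac{\alpha P}{h_p+P}-d_z & \eta\alpha Z\dfrac{h_p}{(h_p+P)^2}\\[1ex] -\dfrac{\alpha P}{h_p+P} & r_p\big(1-\tfrac{2P}{K}\big)-\alpha Z\dfrac{h_p}{(h_p+P)^2}\end{pmatrix}.
\]
At $S_0=(0,0)$ this matrix is diagonal with entries $-d_z<0$ and $r_p>0$, immediately giving a saddle, hence unconditional existence and instability. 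At $S_1=(0,K)$ the entry $J_{12}$ vanishes since $Z=0$, so $J$ is triangular with eigenvalues $-r_p<0$ and $\eta\alpha K/(h_p+K)-d_z$; the latter is negative exactly when $C_PK/(h_p+K)<1$, i.e.\ $C_Z<1$, yielding the stated stability dichotomy. Existence of $S^*$ with positive coordinates follows from \eqref{P_and_Z}, since $P^*>0$ needs $C_P>1$ and $Z^*>0$ rearranges to $C_Z>1$.

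The core of the argument is the co-existence equilibrium. Here I would substitute the equilibrium identity $\eta\alpha P^*/(h_p+P^*)=d_z$, which forces the upper-left entry $J_{11}$ to vanish. Consequently $\det J=-J_{12}J_{21}=\eta\alpha^2 Z^*h_pP^*/(h_p+P^*)^3>0$ for every admissible $K$, so the stability of $S^*$ is decided \emph{solely} by the sign of the trace. Using $\alpha Z^*=r_p(1-P^*/K)(h_p+P^*)$ together with $h_p/(h_p+P^*)=(C_P-1)/C_P$ and $P^*=h_p/(C_P-1)$, I would collapse the trace to $\operatorname{tr}J=r_p\big(\tfrac{1}{C_Z}-\tfrac{2h_p}{K(C_P-1)}\big)$ and the determinant to $\det J=r_pd_z\big(1-\tfrac1{C_Z}\big)$, the latter resting on the identity $(1-P^*/K)(C_P-1)/C_P=1-1/C_Z$. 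Setting $\operatorname{tr}J=0$ and solving for $K$ returns $K^c=h_p+2P^*=h_p(\eta\alpha+d_z)/(\eta\alpha-d_z)$, matching \eqref{K^c}. Because $P^*$ is independent of $K$, differentiating the trace in $K$ gives the manifestly positive quantity $r_pP^*(C_P+1)/(K^2C_P)$, so $\operatorname{tr}J$ is strictly increasing; the three cases follow at once, namely $\operatorname{tr}J<0$ for $K<K^c$ (asymptotically stable), a pair of purely imaginary eigenvalues $\pm i\sqrt{\det J}$ at $K=K^c$ (linear center), and $\operatorname{tr}J>0$ for $K>K^c$ (unstable focus).

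For the periodic-solution claim I would argue that $K=K^c$ is a transversal Hopf point, since $\operatorname{tr}J$ crosses zero with nonzero $K$-derivative while $\det J>0$, so periodic orbits bifurcate for $K>K^c$; alternatively, and more robustly, I would combine the instability of $S^*$ with the global boundedness from the boundedness Lemma and the absence of other interior equilibria to invoke the Poincaré--Bendixson theorem, which forces a bounded periodic orbit in the open positive quadrant. The reported amplitude and period are then read off the eigenvalues $\lambda=\tfrac12\operatorname{tr}J\pm\tfrac{i}{2}\sqrt{4\det J-(\operatorname{tr}J)^2}$: the growth factor per unit time of the spiralling trajectory is $\exp\big(\tfrac12\operatorname{tr}J\big)=\exp\big(\tfrac{r_p}{2C_Z}-\tfrac{r_ph_p}{K(C_P-1)}\big)$, and the period is $2\pi/\operatorname{Im}\lambda=4\pi/\sqrt{4r_pd_z(1-1/C_Z)-(\operatorname{tr}J)^2}$, reproducing the stated expressions once $\operatorname{tr}J$ and $\det J$ are inserted.

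I expect the obstacle to be twofold. The purely algebraic step of collapsing $\operatorname{tr}J$ and $\det J$ into clean functions of $C_Z$ and $C_P$ is where a careless substitution would break the match with \eqref{K^c} and with the amplitude/period formulas, so the identity $(1-P^*/K)(C_P-1)/C_P=1-1/C_Z$ must be gotten exactly right. The more conceptual difficulty is the passage from a \emph{linear} center at $K=K^c$ to genuine \emph{nonlinear} periodic solutions for $K>K^c$: a linear center need not persist under the nonlinear terms, so this cannot be settled by the Jacobian alone and must rely either on a Hopf computation (transversality plus a Lyapunov-coefficient sign) or, more cleanly, on the Poincaré--Bendixson route using the a~priori boundedness already established.
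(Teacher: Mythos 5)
The paper states this lemma \emph{without proof} (it is a classical result for the Rosenzweig--MacArthur system, and Appendix~C gives only the statement), so there is no authors' argument to compare against; judged on its own, your proof is correct and is surely the intended one. Your computations check out: at $S^*$ the equilibrium identity $\eta\alpha P^*/(h_p+P^*)=d_z$ annihilates $J_{11}$, giving $\det J=-J_{12}J_{21}=r_pd_z\bigl(1-\tfrac{1}{C_Z}\bigr)>0$ and $\operatorname{tr}J=r_p\bigl(\tfrac{1}{C_Z}-\tfrac{2h_p}{K(C_P-1)}\bigr)$; using $P^*=h_p/(C_P-1)$ the unique zero of the trace is $K^c=h_p+2P^*=h_p(\eta\alpha+d_z)/(\eta\alpha-d_z)$, the trace is strictly increasing in $K$, and the stated ``amplitude'' and period are exactly $\exp\bigl(\operatorname{tr}J/2\bigr)$ and $4\pi/\sqrt{4\det J-(\operatorname{tr}J)^2}$. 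Your derivation also exposes a typo in the paper: inside the square in the period formula, $\tfrac{2r_ph_p}{K(C_P-1)}$ should read $\tfrac{2h_p}{K(C_P-1)}$, since that parenthesis is $\operatorname{tr}J/r_p$.

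Two caveats, both of which you essentially anticipate. First, ``center when $K=K^c$'' is a statement about the linearization only; for the nonlinear system $K=K^c$ is a Hopf point and the equilibrium there is generically a weak focus, so the lemma's own wording is linear-level and your reading of it is the right one. Second, your Poincar\'e--Bendixson route for $K>K^c$ needs one more ingredient beyond boundedness, instability of $S^*$, and the absence of other interior equilibria: you must exclude that the omega-limit set of an interior orbit contains the boundary saddles $(0,0)$ and $(0,K)$. Since the axes are invariant and the stable manifolds of both saddles lie entirely in the boundary, the Butler--McGehee lemma rules this out; with that sentence added, the existence of a bounded periodic orbit in the open positive quadrant is complete. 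The Hopf alternative you mention would further require the sign of the first Lyapunov coefficient to place the bifurcating cycles on the side $K>K^c$, so the Poincar\'e--Bendixson argument is indeed the cleaner way to justify part~(iii).
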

\noindent Note that $K^c$ represents the critical phytoplankton carrying capacity at which the coexistence equilibrium in the phytoplankton-zooplankton system~\eqref{phyto_zoo_model} becomes unstable.

\section{Outbreak dynamics}
\label{outbreak_appendix}
\subsection{\textbf{Condition for initial outbreak growth}}\label{out_cond_cal}
The basic characteristics of an outbreak at an initial stage can be inferred from the sign of $dI/dt|_{t=0}$. The outbreak will initially grow, remain stationary, or decay depending on whether $dI/dt|_{t=0}$ is greater than, equal to, or less than zero, respectively. From the 2nd equation of model~(2.1), we have the necessary condition for the initial growth of an outbreak as
\begin{equation}
    \displaystyle \frac{\beta S_0B_0}{h_b + B_0} + \frac{\beta_z S_0Z_{B_0}}{h_z + Z_{B_0}} > (\gamma + \delta)I_0.
\label{eq0}
\end{equation}
Here $N_0=S_0+I_0$ is the initial human population size, where $S_0$ and $I_0$ represent the initial susceptible and infected population, respectively. From the 4th \& 5th equation of model~(2.1), the initial growth in $B, Z_B$ cells implies $dB/dt>0$, $dZ_B/dt>0$, which gives
\begin{equation*}
    \displaystyle  \xi I_0 -d_b B_0- c\sigma \frac{B_0Z_F^*}{h_m+B_0} >0 ~~\text{and} ~~ \displaystyle  \sigma \frac{B_0Z_F^*}{h_m+B_0} -d_zZ_{B_0} >0.
\end{equation*}
After simplification, we have
\begin{equation}
\displaystyle I_0> \frac{d_b B_0}{\xi}+c\sigma \frac{B_0Z_F^*}{\xi(h_m+B_0)} ~~\text{and}~~ Z_{B_0}<\sigma\frac{B_0Z_F^*}{d_z(h_m+B_0)}.
\label{eq1}
\end{equation}
Using Eq.~\eqref{eq1}, from Eq.~\eqref{eq0}, we have
\begin{equation*}
    \displaystyle \frac{\beta S_0B_0}{h_b + B_0} + \frac{\beta_z S_0}{h_z + Z_{B_0}}\sigma\frac{B_0Z_F^*}{d_z(h_m+B_0)}  > \frac{(\gamma + \delta)}{\xi}\Big(d_b B_0+c\sigma \frac{B_0Z_F^*}{h_m+B_0}\Big)
\end{equation*}
\begin{equation*}
\displaystyle {\rm or},~~ \frac{\xi S_0}{(\gamma+\delta)}\Bigg(\frac{\beta}{h_b + B_0} + \frac{\beta_z}{h_z + Z_{B_0}}\sigma\frac{Z_F^*}{d_z(h_m+B_0)}\Bigg)> \Big(d_b+c\sigma \frac{Z_F^*}{h_m+B_0}\Big).
\label{eq2}
\end{equation*}
Now, at $t=0$, from above, $S_0\approx N_0$ and $B_0, Z_{B_0}\approx0, Z_{F_0}\approx Z^*$ give
\begin{equation}
\begin{array}{lll}
     & \displaystyle \frac{\xi N_0}{(\gamma+\delta)}\Bigg[\frac{\beta}{h_b} + \frac{\beta_z}{h_z}\sigma\frac{Z^*}{d_zh_m}\Bigg]> \Big(d_b+c\sigma \frac{Z^*}{h_m}\Big), \\\\
    or, & \displaystyle\frac{\xi N_0}{(\gamma+\delta)\Big(d_b+c\sigma \frac{Z^*}{h_m}\Big)}\Bigg[\frac{\beta}{h_b} + \frac{\beta_z}{h_z}\sigma\frac{Z^*}{d_zh_m}\Bigg]> 1,\\\\
    or, &  \mathcal{R}_{0_{\rm BZ}}^{\rm out} = \displaystyle \mathcal{R}_{0_B}^{\rm out}+\mathcal{R}_{0_Z}^{\rm out}>1
\end{array}
\label{R0BZ_app}
\end{equation}

\noindent The outbreak initially remains stationary or in a decaying state and subsequently decays whenever $\displaystyle \mathcal{R}_{0_{\rm BZ}}^{\rm out}\leq 1$. Here, $\displaystyle\frac{\xi}{(\delta+\gamma)}$ represents the average amount of pathogen shed by an infected person throughout their infectious period. 

\subsection{\textbf{Derivation of $\beta_z$ on $\displaystyle \mathcal{R}_{0_{\rm BZ}}^{\rm out}=1$ and $\displaystyle \mathcal{R}_{0_{\rm BZ}}^{\rm out}=\displaystyle \mathcal{R}_{0}^{\rm out}$ line}}\label{app_out_sec2}

From Eq.~\eqref{R0BZ_app}, we have the critical zooplankton-mediated transmission rate, $\beta_z^{c_1}$, sufficient to initiate an outbreak ($\displaystyle \mathcal{R}_{0_{\rm BZ}}^{\rm out}>1$) for each specific $\beta$, as given by:
\begin{equation}
  \beta_z^{c_1} = \Bigg(\frac{\Big(d_b +\displaystyle \frac{c\sigma Z^*}{h_m}\Big)(\delta+\gamma)}{\xi N_0}- \frac{\beta}{h_b}\Bigg)\frac{h_m h_z d_z}{\sigma Z^*}
  \label{appendix-beta_zc1}
    \end{equation}
Depending on the bacteria-zooplankton ($B$-$Z$) association rate ($\sigma$) and the transmission rate via zooplankton ($\beta_z$), $\displaystyle \mathcal{R}_{0_{\rm BZ}}^{\rm out}$ can be lower or higher than $\displaystyle \mathcal{R}_0^{\rm out}(= \mathcal{R}_{0_{\rm BZ}}^{\rm out}|_{\sigma = 0}$).\\
Now, on the $\displaystyle \mathcal{R}_{0_{\rm BZ}}^{\rm out}=\displaystyle \mathcal{R}_{0}^{\rm out}$ line, we have
\begin{equation*}
    \begin{array}{llll}
       \displaystyle \mathcal{R}_{0_{\rm BZ}}^{\rm out} = \mathcal{R}_0^{\rm out} &  \\\\
       \text{or,}~~ \displaystyle\frac{\xi N_0}{(\gamma+\delta)\Big(d_b+c\sigma \frac{Z^*}{h_m}\Big)}\Bigg[\frac{\beta}{h_b} + \frac{\beta_z}{h_z}\sigma\frac{Z^*}{d_zh_m}\Bigg] =~ \displaystyle\frac{\xi N_0}{d_b(\gamma+\delta)}\frac{\beta}{h_b} & \\\\
       \text{or,}~~ \displaystyle \frac{1}{\Big(d_b+c\sigma \frac{Z^*}{h_m}\Big)} \frac{\beta_z}{h_z}\sigma\frac{Z^*}{d_z h_m} = \displaystyle\frac{\beta}{h_b} \Bigg[\frac{1}{d_b}-\frac{1}{\Big(d_b+c\sigma \frac{Z^*}{h_m}\Big)}\Bigg] & \\\\
       \text{or,}~~ \displaystyle \frac{1}{\Big(d_b+c\sigma \frac{Z^*}{h_m}\Big)} \Bigg(\frac{\beta_z \sigma Z^*}{h_z d_z h_m} - \frac{\beta c\sigma Z^*}{h_b d_b h_m}\Bigg) = 0 & \\\\
       \text{or,}~~ \displaystyle \beta_z = \displaystyle \frac{c h_z d_z}{h_b d_b}\beta = \beta_z^{c_2}. & 
    \end{array}
\end{equation*}
Now, $\displaystyle\frac{\partial \beta_z^{c_1}}{\partial \beta}<0$ and $\displaystyle\frac{\partial \beta_z^{c_2}}{\partial \beta}>0$ indicate that at an increased $\beta$, a lower $\beta_z^{c_1}$ is required for the initial outbreak growth, whereas a higher $\beta_z^{c_2}$ is necessary for $\displaystyle \mathcal{R}_{0_{\rm BZ}}^{\rm out}>\mathcal{R}_0^{\rm out}$.\\

\noindent On the line $\mathcal{R}_0^{\rm out}=1$ \Big(i.e., $\displaystyle \beta=\frac{(\gamma+\delta)h_bd_b}{\xi N_0}$\Big), we have
\begin{equation*}
\begin{array}{ll}
    \displaystyle \beta_z^{c_1} &=\displaystyle \frac{(\gamma+\delta)}{\xi N_0}\Bigg(\Big(d_b + \frac{c\sigma Z^*}{h_m}\Big)- \frac{\xi N_0}{(\gamma+\delta)}\frac{\beta}{h_b}\Bigg)\frac{h_m h_z d_z}{\sigma Z^*}  \\\\
   &= \displaystyle \frac{(\gamma+\delta)}{\xi N_0}\Bigg(\Big(d_b +\frac{c\sigma Z^*}{h_m}\Big)- d_b\Bigg)\frac{h_m h_z d_z}{\sigma Z^*} = \frac{(\gamma+\delta)ch_zd_z}{\xi N_0}.
\end{array}
    \end{equation*}
Also, $\beta_z^{c_2}=\displaystyle\frac{ch_zd_z}{h_bd_b}\frac{(\gamma+\delta)h_bd_b}{\xi N_0}= \frac{(\gamma+\delta)ch_zd_z}{\xi N_0}$.\\\\
Since $\beta_z^{c_2}$ satisfies $\displaystyle \mathcal{R}_{0_{\rm BZ}}^{\rm out}= \mathcal{R}_0^{\rm out}$, so we have $\beta_z^{c_1}=\beta_z^{c_2}$ at $\mathcal{R}_0^{\rm out}=\mathcal{R}_{0_{\rm BZ}}^{\rm out}=1$.

\subsection{\textbf{Effect of $\sigma$ on $\mathcal{R}_{0_{\rm BZ}}^{\rm out}$ within the ($\beta$-$\beta_z$) parameter space}}
Now, \begin{equation*}
    \begin{array}{lll}
        \displaystyle\frac{\partial \beta_z^{c_1}}{\partial \sigma}&=\displaystyle \frac{h_mh_zd_z}{Z^*}\Big(-\frac{(\gamma+\delta)}{\xi N_0}\frac{d_b}{\sigma^2} + \frac{\beta}{h_b\sigma^2}\Big)  \\\\
        &= \displaystyle \frac{h_mh_zd_z}{h_b\sigma^2Z^*}\Big(\beta-\frac{(\gamma+\delta)h_bd_b}{\xi N_0}\Big) \\\\
        &= \displaystyle \frac{h_mh_zd_z}{h_b\sigma^2Z^*}\frac{(\gamma+\delta)h_bd_b}{\xi N_0}\Big(\mathcal{R}_0^{\rm out}-1\Big) 
    \end{array}
\end{equation*}
Thus, a higher $\sigma$ decreases $\beta_z^{c_1}$ when $\mathcal{R}_0^{\rm out}<1$ and increases $\beta_z^{c_1}$ when $\mathcal{R}_0^{\rm out}>1$. In other words, $\sigma$ decreases the slope of the line $\displaystyle \mathcal{R}_{0_{\rm BZ}}^{\rm out}=1$. As a result, regions \boxed{6}, \boxed{3} expand at the expense of \boxed{5}, \boxed{2}, respectively, with higher $\sigma$ (see Fig.~2).\\

\noindent Now, \begin{equation*}
\begin{array}{ll}
 \displaystyle \frac{\partial\mathcal{R}_{0_{\rm BZ}}^{\rm out}}{\partial\sigma} &= \displaystyle\frac{\xi N_0}{(\gamma+\delta)} \frac{\frac{Z^*}{h_m}}{\Big(d_b+c\sigma \frac{Z^*}{h_m}\Big)^2} \Bigg[\frac{\beta_z d_b}{h_z d_z}-\frac{c\beta}{h_b}\Bigg]\\\\
 &=\displaystyle \frac{\xi N_0}{(\gamma+\delta)} \frac{\frac{Z^*}{h_m}}{\Big(d_b+c\sigma \frac{Z^*}{h_m}\Big)^2}\frac{d_b}{h_zd_z} \Big(\beta_z-\beta_z^{c_2}\Big) \\\\
 &\displaystyle >, ~= ~\text{or} < 0 ~~~\text{iff}~ \beta_z >, ~= ~\text{or} < \beta_z^{c_2}. 
\end{array}
\end{equation*}
The decrease (within \boxed{2}, \boxed{3}, \boxed{4}) or increase (within \boxed{5}, \boxed{6}, \boxed{1}) of $\mathcal{R}_{0_{\rm BZ}}^{\rm out}$ with higher $\sigma$, depending on whether $\beta_z<\beta_z^{c_2}$ or $\beta_z>\beta_z^{c_2}$, can be observed in Fig.~2(b).

\subsection{\textbf{Sensitivity analysis}}
\label{sensitivity}
 \begin{figure}[h!]
\begin{center}
{\includegraphics[width=0.7\textwidth]{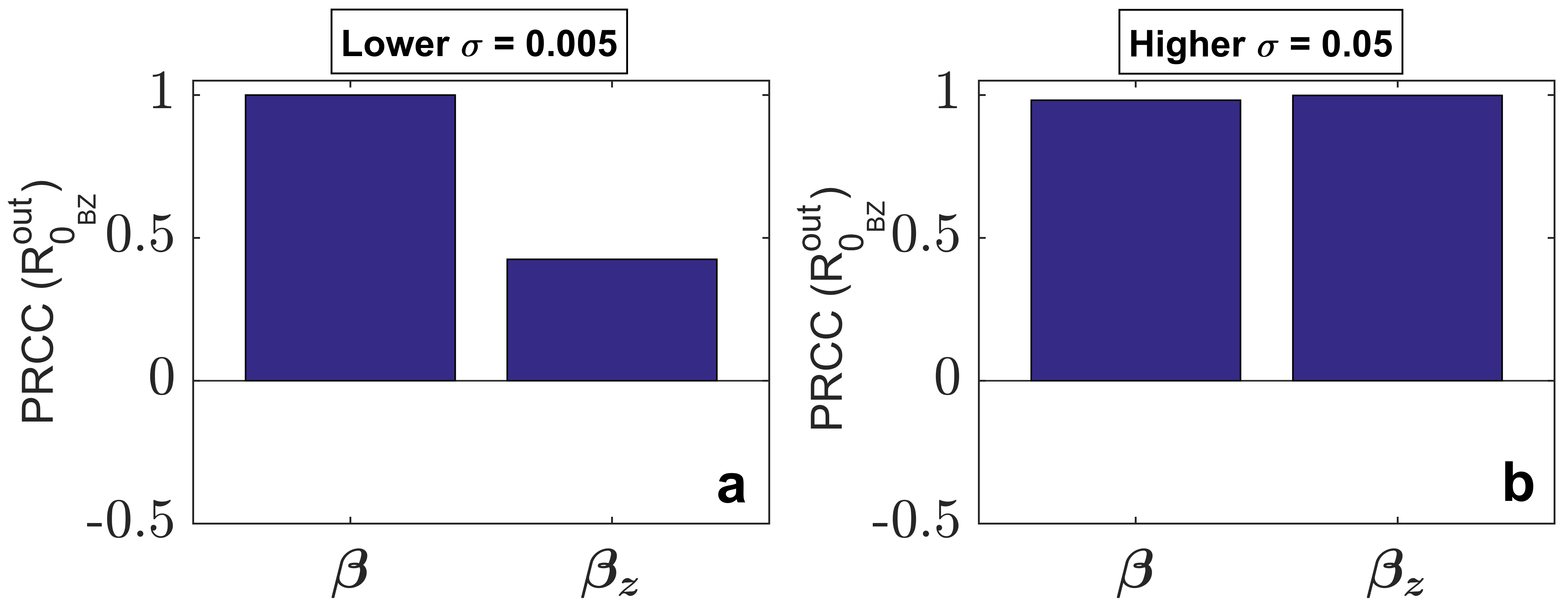}}
\end{center}
\caption{Sensitivity indices for $\mathcal{R}_{0_{\rm BZ}}^{\rm out}$ w.r.t $\beta$, $\beta_z$ under low $\sigma$ ($\sigma=0.005$)(a) and high $\sigma$ ($\sigma=0.05$)(b).}
\label{figS1-prcc_R0}
\end{figure}

We perform a global sensitivity analysis of $\mathcal{R}_{0_{\rm BZ}}^{\rm out}$ with respect to $\beta$ and $\beta_z$ under both low and high $\sigma$, using Latin hypercube sampling (LHS) coupled with partial rank correlation coefficients (PRCCs) \citep{saltelli2004sensitivity, marino2008methodology}. We draw 1,000 samples for each parameter using the LHS scheme, considering a $20\%$ range of variation around baseline values given in Table~1. Under high $\sigma$, $\beta_z$ shows greater sensitivity to $\mathcal{R}_{0_{\rm BZ}}^{\rm out}$ compared to lower $\sigma$, while as expected, $\beta$ remains consistently more sensitive in both cases (see Fig.~\ref{figS1-prcc_R0}).


\begin{figure}[h!]
    \centering
    \includegraphics[width=0.5\linewidth]{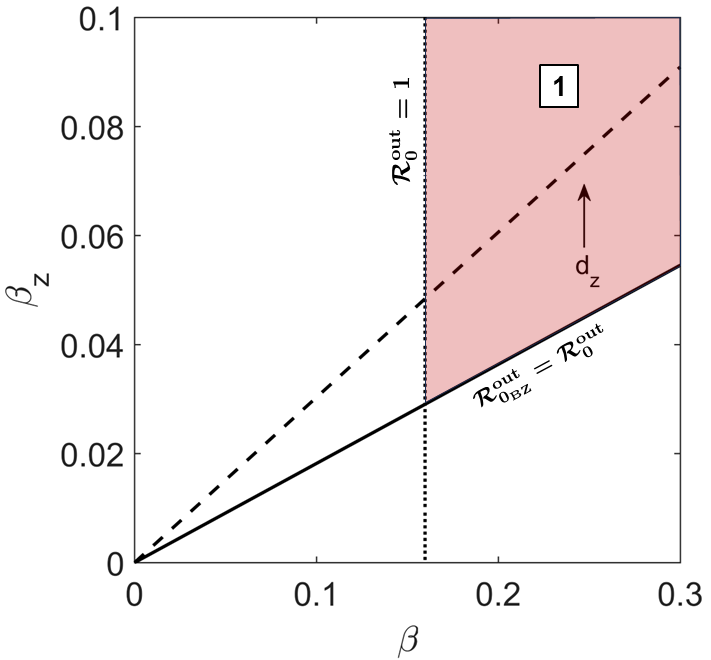}
    \caption{Zooplankton death rate ($d_z$) increases the slope of the $\mathcal{R}_{0_{\rm BZ}}^{\rm out}=\mathcal{R}_0^{\rm out}$ line and thereby decreasing region \boxed{1}. Here, for the solid line, $d_z=0.06$, and for the dashed line, $d_z=0.1$.}
    \label{fig7-app_beta_vs_betaz_dz}
\end{figure}

\subsection{\textbf{Relative contribution of $\mathcal{R}_{0_B}^{\rm out}$, $\mathcal{R}_{0_Z}^{\rm out}$ to $\mathcal{R}_{0_{\rm BZ}}^{\rm out}$}}\label{app_rel_cont}

For a fixed $\beta$, to keep $\mathcal{R}_{0_{\rm BZ}}^{\rm out}$ same as $\mathcal{R}_0^{\rm out}$ \Big(i.e., $\displaystyle \beta_z =\frac{c h_z d_z}{h_b d_b}\beta$\Big), there exists a unique bacteria-zooplankton association rate ($\sigma$) for each pair of relative contributions ($\mathcal{R}_{0_B}^{\rm out}$, $\mathcal{R}_{0_Z}^{\rm out}$).\\
Now,
\begin{equation*}
    \begin{array}{ll}
      \displaystyle\mathcal{R}_{0_B}^{\rm out}&= \displaystyle\frac{\xi N_0}{(\gamma+\delta)\Big(d_b+c\sigma \frac{Z^*}{h_m}\Big)}\frac{\beta}{h_b} \\\\
  \text{~or,~}  \displaystyle \sigma&=\displaystyle\frac{h_m}{c Z^*}\Bigg(\frac{\xi N_0}{(\gamma+\delta)h_b}\frac{\beta}{h_b}\frac{1}{\mathcal{R}_{0_B}^{\rm out}}-d_b \Bigg)\\\\
  &=\displaystyle\frac{d_bh_m}{c Z^*}\Bigg(\frac{\mathcal{R}_{0_{\rm BZ}}^{\rm out}}{\mathcal{R}_{0_B}^{\rm out}}-1 \Bigg) = \displaystyle\frac{d_bh_m}{c Z^*}\frac{\mathcal{R}_{0_Z}^{\rm out}}{\mathcal{R}_{0_B}^{\rm out}}=\displaystyle\frac{d_bh_m}{c Z^*}\Bigg(\frac{1}{\mathcal{R}_{0_{\rm BZ}}^{\rm out}/\mathcal{R}_{0_{Z}}^{\rm out}-1}\Bigg)\\\\
  \text{~or,~}  \displaystyle \frac{\mathcal{R}_{0_{Z}}^{\rm out}}{\mathcal{R}_{0_{\rm BZ}}^{\rm out}}&= \displaystyle \frac{c\sigma Z^*}{d_bh_m+c\sigma Z^*}.
    \end{array}
\end{equation*}

\subsection{\textbf{Epidemic growth rate (EGR)}}
\label{EGR}
The initial exponential growth rate (EGR) of an epidemic assesses the early outbreak dynamics and is crucial in inferring the basic reproduction number. The EGR is determined by the dominant eigenvalue of the Jacobian at the disease-free equilibrium point $E^0=(N_0, 0, 0, 0, 0, Z^*, P^*)$ \citep{tien2010multiple}. For system~(2.1), the initial EGR, denoted by $\lambda$, can be obtained from the following equation:
\begin{equation*}
     \displaystyle x^3+c_2 x^2+c_1 x+c_0=0
     \label{cubic_lambda}
\end{equation*}
Where the coefficients are given by
\begin{eqnarray*}
\begin{array}{lll}
      \displaystyle c_2&=&\displaystyle d_b+\frac{c \sigma Z^*}{h_m}+\gamma+\delta+d_z>0,\\[1.5ex]
      \displaystyle c_1&=&\displaystyle\Big(\frac{c \sigma Z^*}{h_m}+d_b+d_z\Big)(\gamma+\delta)+\Big(\frac{c \sigma Z^*}{h_m}+d_b\Big) d_z-\frac{N_0\xi \beta}{h_b},\\[1.5ex]
      \displaystyle c_0&=&\displaystyle(\gamma+\delta)\Big(\frac{c \sigma Z^*}{h_m}+d_b\Big) d_z- \frac{N_0 \xi \beta d_z}{h_b}- \frac{N_0 \xi \beta_z}{h_z}\frac{\sigma Z^*}{h_m},\\[1.5ex]
      \displaystyle&=&\displaystyle(\gamma+\delta)\Big(\frac{c \sigma Z^*}{h_m}+d_b\Big) d_z \Big(1-\mathcal{R}_{0_{\rm BZ}}^{\rm out} \big).
\end{array}
\end{eqnarray*}
\begin{figure}[h!]
\begin{center}
{\includegraphics[width=0.6\textwidth]{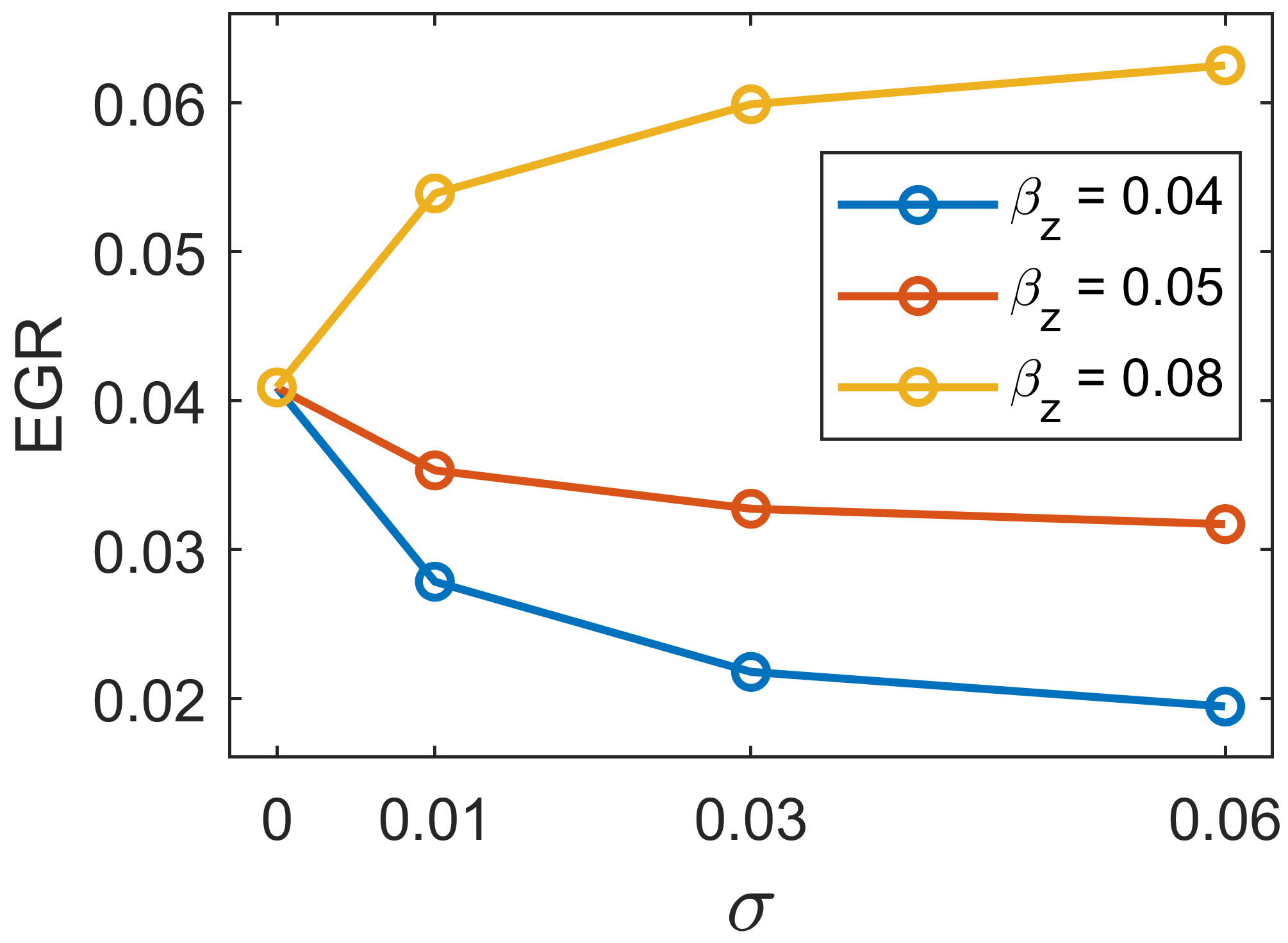}}
\end{center}
\caption{Initial epidemic growth rate (EGR) with respect to the \textit{Vibrio}-zooplankton association rate ($\sigma$) for different zooplankton-mediated transmission rate ($\beta_z$) within region \boxed{1} of Fig.~2(a).}
\label{FigS2-sigma_vs_EGR}
\end{figure}
For $\mathcal{R}_{0_{\rm BZ}}^{\rm out}<1$, we have $c_0>0$ and can also show that $c_1>0$. Using the Descartes' rule of signs, we have $\lambda < 0$, which indicates that the outbreak will not initiate with a few initial infections. However, for $\mathcal{R}_{0_{\rm BZ}}^{\rm out}>1$, we get $c_0<0$. In this case, regardless of the sign of $c_1$, we have $\lambda > 0$, implying that the disease has the potential to invade the population.\\

The impact of increasing $\sigma$ on the EGR depends on the value of $\beta_z$ within region \boxed{1} of Fig.~2(a) (see Fig.~\ref{FigS2-sigma_vs_EGR}). 


\subsection{\textbf{Post-peak maintenance of low-level infections}}
Here, we investigate the impact of varying transmission rates via zooplankton ($\beta_z$) on outbreak trajectories, under a fixed bacteria-zooplankton association rate ($\sigma$) within \boxed{1} in Fig.~2. Zooplankton serves as a reservoir for \emph{Vibrios}, helping to sustain lower-level infections over an extended post-peak phase before finally dying out (see Fig.~\ref{figS3-time_series}).
\begin{figure}[h!]
\begin{center}
\includegraphics[width=0.99\textwidth]{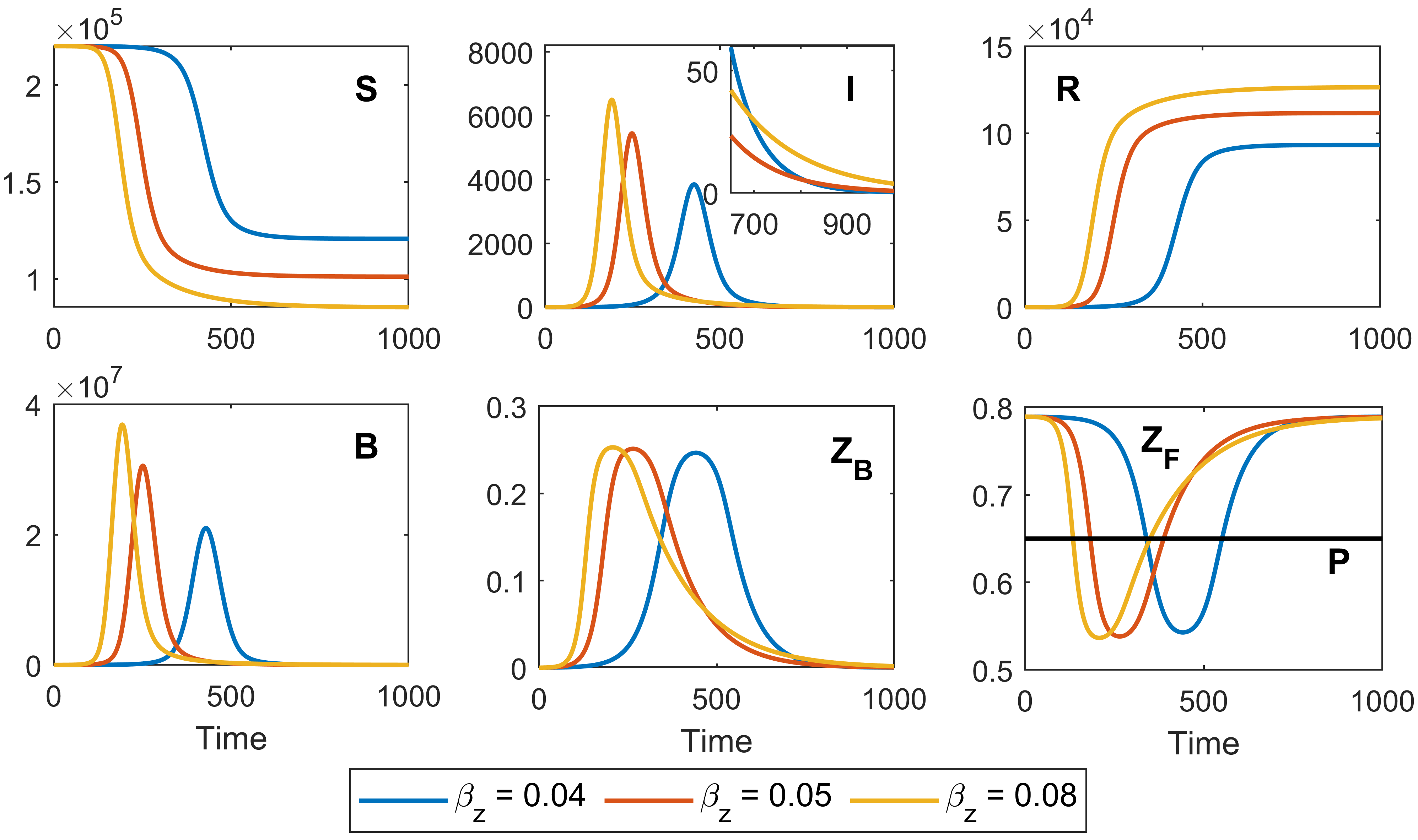}
\end{center}
\caption{Increased zooplankton-mediated transmission ($\beta_z$) can have potentially large negative impacts on human health under a fixed bacteria-zooplankton association rate ($\sigma=0.03$). It shortens the time to peak infection, increases the peak size, and elevates lower-level maintenance of infections during the post-peak period. 
}
\label{figS3-time_series}
\end{figure}

\begin{figure}[h!]
\begin{center}
{\includegraphics[width=0.85\textwidth]{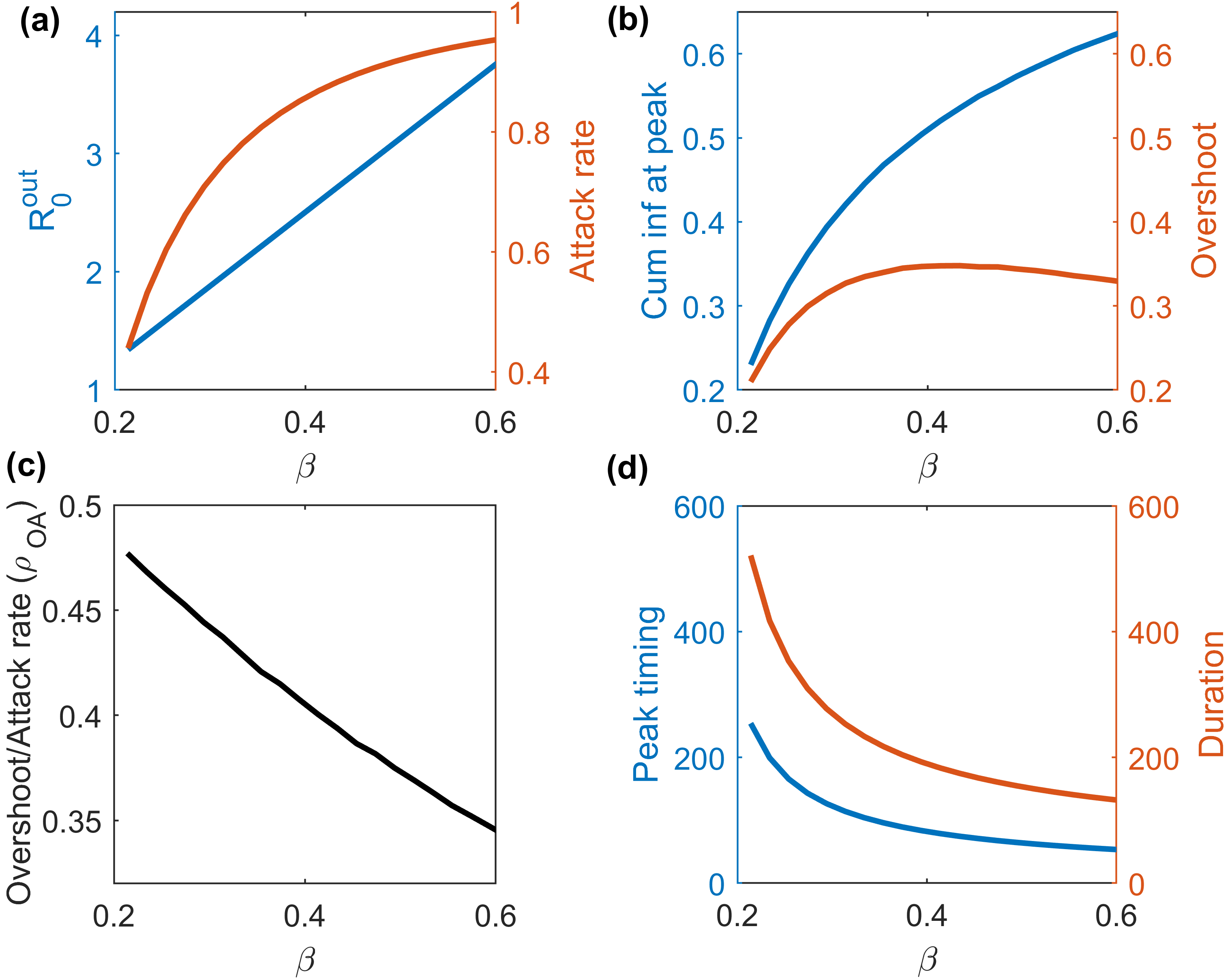}}
\end{center}
\caption{Effect of transmission rate via the free-living bacterial route ($\beta$) for the classical SIRB model on (a) $\mathcal{R}_0^{\rm out}$ and attack rate, (b) cumulative infections at peak (as a proportion of the total population) and epidemic overshoot, (c) the ratio of overshoot to attack rate ($\rho_{\rm OA}$), and (d) peak timing and epidemic duration in the absence of bacteria-zooplankton association ($\sigma=0$). Intuitively, both $\mathcal{R}_0^{\rm out}$ and attack rate increase with $\beta$. While $\beta$ increases cumulative infections at peak, there exists an upper bound on the overshoot. The ratio $\rho_{\rm OA}$ decreases monotonically with increasing $\beta$ due to the availability of fewer susceptible individuals for the overshoot (post-peak) phase. Both peak timing and epidemic duration consistently decrease as $\beta$ increases.}
\label{fig8-app_beta_vs_overshoot}
\end{figure}

\section{Long-term dynamics}

\subsection{\textbf{The basic reproduction number ($\mathcal{R}_{0_{\rm BZ}}^l$)}}
\label{R0_calculation}
 Using the next-generation matrix approach \citep{diekmann2010construction}, we obtain the basic reproduction number ($\mathcal{R}^l_{0_{\rm BZ}}$) in long-term scenario ($\Lambda,\mu,\omega\neq 0$), defined as the average number of secondary infections appearing from an average primary case within an entirely susceptible population.

\noindent When $C_Z>1$, the disease-free equilibrium (DFE) for the model~(2.1) is given by $\displaystyle E^0=(\frac{\Lambda}{\mu}, 0, 0, 0, 0, Z^*, P^*)$. 
If $X=(I,B,Z_B)^T$, which are the infected compartments, then we can write $\displaystyle \frac{dX}{dt} = (\mathcal{F}+\mathcal{V})X$ from Eq.~(2.1), where,

{\small
\begin{equation*}
 \mathcal{F}=\displaystyle \left(\begin{array}{ccc} 0 & \displaystyle \frac{\beta h_b S}{(h_b+B)^2} & \displaystyle \frac{\beta_z h_z S}{(h_z+Z_B)^2}\\0 & 0 & 0 \\0 & 0 & 0 \end{array}\right),
 \end{equation*}
 \begin{equation*}
\mathcal{V}= \left(\begin{array}{ccc}  -(\gamma +\mu+\delta) & 0 & 0 \\\xi &\displaystyle -d_b- c\sigma \frac{h_m Z_F}{(h_m+B)^2} & 0\\ 0 &\displaystyle \sigma \frac{h_m Z_F}{(h_m+B)^2}  &\displaystyle -d_z\end{array}\right).
\end{equation*}}

\noindent This gives,
{\small
\begin{equation*}
F= \mathcal{F}|_{E^0}=\displaystyle \left(\begin{array}{ccc} 0 & \displaystyle \frac{\beta \Lambda}{h_b\mu} & \displaystyle \frac{\beta_z \Lambda}{h_z\mu}\\0 & 0 & 0 \\0 & 0 & 0 \end{array}\right), 
~V=-\mathcal{V}|_{E^0}= \left(\begin{array}{ccc}  \gamma +\mu+\delta & 0 & 0 \\-\xi &\displaystyle d_b+ \frac{ c\sigma Z^*}{h_m} & 0\\ 0 &\displaystyle -\frac{\sigma Z^*}{h_m}  &\displaystyle d_z\end{array}\right).
\end{equation*}}

\noindent Now,
\begin{eqnarray*}
    \begin{array}{cc}
    FV^{-1}& \\
        =&\displaystyle \left(\begin{array}{ccc} 0 & \displaystyle \frac{\beta \Lambda}{h_b\mu} & \displaystyle \frac{\beta_z \Lambda}{h_z\mu}\\0 & 0 & 0 \\0 & 0 & 0 \end{array}\right)
\left(\begin{array}{ccc}  \displaystyle \frac{1}{\gamma +\mu+\delta} & 0 & 0 \\[2.5ex] \displaystyle  \frac{h_m \xi}{(d_b h_m + c\sigma Z^*)(\delta + \gamma + \mu)} &\displaystyle  \frac{h_m}{d_b h_m + \sigma Z^* }& 0\\[2.5ex] \displaystyle \frac{\xi \sigma Z^*}{d_z(d_b h_m + c\sigma Z^*)(\delta + \gamma + \mu)}& \displaystyle \frac{\sigma Z^*}{d_z (d_b h_m + \sigma Z^*)} &\displaystyle \frac{1}{d_z}\end{array}\right)
    \end{array}
\end{eqnarray*}

\noindent We have the basic reproduction number for the long-term scenario as
\begin{eqnarray}
    \begin{array}{ll}
\displaystyle\mathcal{R}^l_{0_{BZ}}& = \rho(FV^{-1})\\
    & = \displaystyle \frac{\xi \Lambda}{\mu (d_b +\displaystyle \frac{c\sigma Z^*}{h_m})(\delta+\gamma+\mu)}\Bigg(\frac{\beta}{h_b} + \frac{\sigma Z^*}{d_z h_m}\frac{\beta_z}{h_z}\Bigg)\\\\
    & = \displaystyle\mathcal{R}^l_{0_B}+\mathcal{R}^l_{0_Z}.
\end{array}
\end{eqnarray}
Here, the components $\mathcal{R}^l_{0_B}$ and $\mathcal{R}^l_{0_Z}$ are associated with transmission via the $B$ and $Z_B$ route, respectively. 


\subsection{\textbf{Impact of bacteria-zooplankton association on cholera endemicity}}\label{app_long_term_region}

In Fig.~\ref{figS4-long_term_beta_vs_betaz}, we present different dynamical behaviors in the long-term endemic scenario ($\Lambda,\mu,\omega\neq 0$), similar to the outbreak scenario in Fig.~2. Regions \boxed{1}-\boxed{6}, separated by the three lines $\mathcal{R}_0^l=1$ (dashed vertical line), $\mathcal{R}_{0_{BZ}}^l=1$ (red line) and $\mathcal{R}_{0_{BZ}}^l=\mathcal{R}_0^l$ (black line), characterize the endemic persistence of cholera and the increased endemic severity for specific values of $\beta$ and $\sigma$ (Fig.~\ref{figS4-long_term_beta_vs_betaz}). Demographic factors can impact the size of the regions \boxed{1}-\boxed{6}. For instance, in (Fig.~\ref{figS4-long_term_beta_vs_betaz}(b)), it can be observed that with higher recruitment rates of susceptibles, even lower transmission rates can sustain cholera endemicity. This occurs because a higher constant influx of susceptible individuals, even at low transmission rates, fuels new infections, thereby maintaining endemicity.

\begin{figure}
    \centering
         \includegraphics[width=0.99\textwidth]{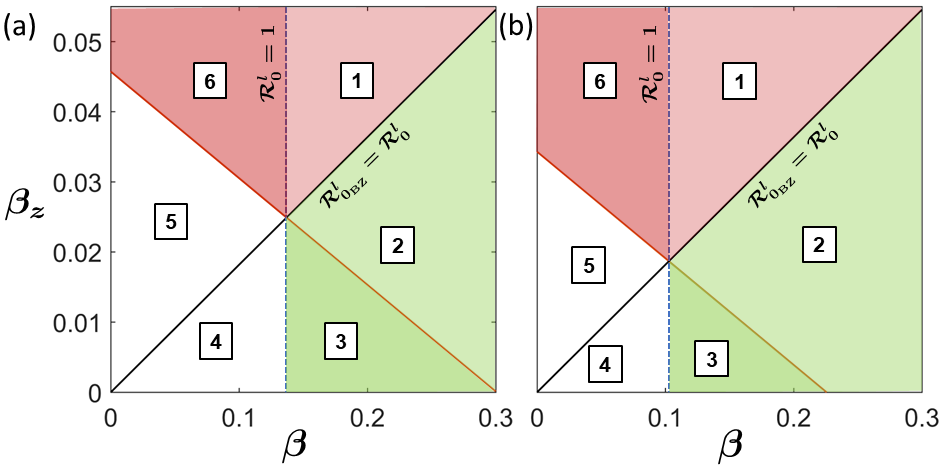}
    \caption{Lower values of $\beta$ and $\beta_z$ are sufficient to sustain cholera endemicity in a population with a higher recruitment rate ($\Lambda$). (a) $\Lambda=10.046$ and (b) $\Lambda=13.394$.}
    \label{figS4-long_term_beta_vs_betaz}
\end{figure}

\begin{figure}[h!]
\begin{center}
{\includegraphics[width=0.99\textwidth]{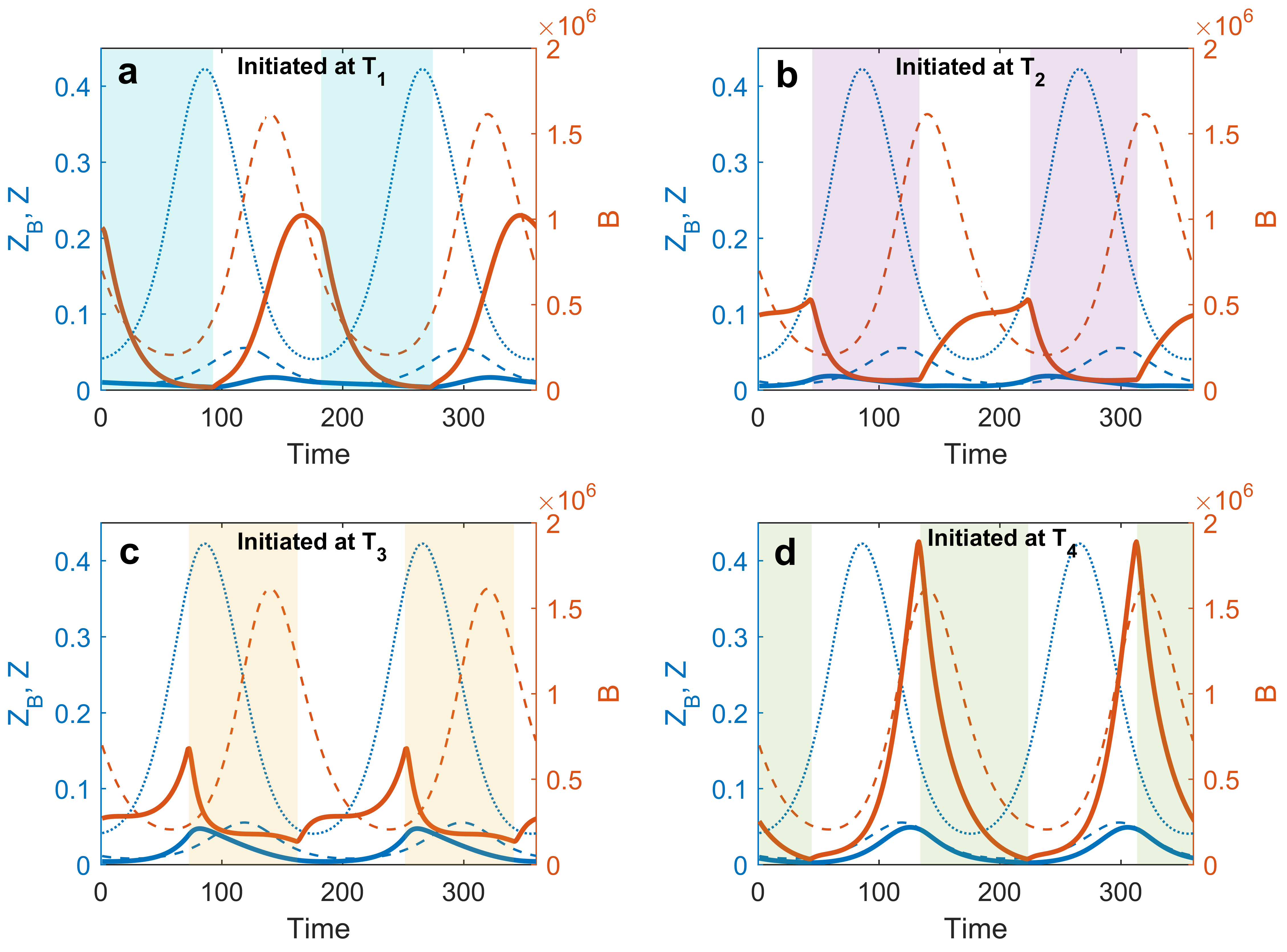}}
\end{center}
\caption{
The formation of $Z_B$ (solid and dashed blue) depends on the density of $B$ cells (solid and dashed red) in the water column during periods of $Z$ abundance (dotted blue line) in both unfiltered (dashed) and filtered (solid) scenarios when filtration is initiated at $T_1$-$T_4$. Filtration reduces cholera infections in two ways: first, by restricting $Z_B$ formation through the reduction of $B$ cell concentration during the periods $Z$ abundance; and second, if $Z_B$ formation cannot be avoided, by preventing its ingestion.
}
\label{fig9-app_filtration}
\end{figure}

\clearpage
\newpage

\section*{Author's contribution}
\textbf{B. M.:} Conceptualization, Methodology, Formal analysis, Software, Writing - original draft, Writing - review $\&$ editing. \textbf{S. B.:} Conceptualization, Methodology, Supervision, Writing - review $\&$ editing. \textbf{A. S.:} Conceptualization, Methodology, Supervision, Writing - review $\&$ editing. \textbf{J. P.:} Conceptualization, Writing - review $\&$ editing. \textbf{J. C.:} Supervision, Conceptualization, Writing - review $\&$ editing.


\section*{Acknowledgements}
B. M. is supported by Junior Research Fellowship from University Grants Commission (UGC), India. A.S. would like to acknowledge Senior Research Fellowship from CSIR, India for funding him during the initial part of this work. S.B. acknowledges the funding via the Foundations and Applications of Emergence (FAEME) programme at the Dutch Institute for Emergent Phenomena (DIEP), University of Amsterdam and funding via the Marie Skłodowska–Curie grant agreement 101025056 for the project `SpatialSAVE'.

\bibliography{zoo_chol}
\bibliographystyle{elsarticle-harv}
\end{document}